\newcommand{\N}{\mathbb{N}}
\newcommand{\R}{\mathbb{R}}
\newcommand{\Z}{\mathbb{Z}}
\newtheorem{Thm}{Theorem}[section]
\newtheorem{Claim}[Thm]{Claim}
\newtheorem{Pro}[Thm]{Proposition}
\begin{document}

\title{Dynamics of buyer populations in fresh product markets}
\author{Ali Ellouze and Bastien Fernandez}
\date{}
\maketitle

\begin{center}
Laboratoire de Probabilit\'es, Statistique et Mod\'elisation\\
CNRS - Univ. Paris Cit\'e -  Sorbonne Univ.\\
Paris, France\\
ellouze@lpsm.paris and fernandez@lpsm.paris
\end{center}

\begin{abstract}
Based on empirical evidences and previous studies, we introduce and mathematically study a perception-driven model for the dynamics of  buyer populations in markets of perishable goods. Buyer behaviours are driven partly by some loyalty to the sellers that they previously purchased at, and partly by the sensitivity to the intrinsic attractiveness of each seller in the market. On the other hand, the sellers update they attractiveness in time according to the difference between the volume of their clientele and the mean volume of buyers in the market, optimising either their profit when this difference is favourable or their competitiveness otherwise. While this negative feedback mechanism is a source of instability that promotes oscillatory behaviour, our analysis identifies the critical features of the dynamics that are responsible for the asymptotic stability of the stationary states, both in their immediate neighbourhood and globally in phase space. 
Altogether, this study provides mathematical insights into the consequences of introducing feedback into buyer-seller interactions in such markets, with emphasis on identifying conditions for long term constancy of clientele volumes.
\end{abstract}

\leftline{\small\today.}

\section{Introduction and definition of the model}
\subsection{Background and context}
In addition to their historical and cultural importance in rural societies \cite{G78}, fresh product and foodstuff markets also play a central role in the supply chain of contemporary urban populations, including at the megalopolis scale. Some of the largest examples of such wholesale markets are the Rungis International Market near Paris, the Central de Abastos in Mexico City and Mercamadrid in Madrid, which each secure daily food supply for tens of millions of people \cite{MarketsWebSites}.  

Many such markets function based on pairwise buyer-seller interactions, without posted prices and/or include bargaining as a standard practice.
In that case, buyer behaviours are known to be driven by some degree of loyalty to the previously visited merchants and, concurrently, by probing for best profit \cite{C10,CTG12,K01,R13,VE11,WKH00}. Loyalty simply means the trend to return to the same sellers that were traded with at the previous market instance (e.g.\ previous week-day or previous week). In absence of public or solid prices, best profit is a more convoluted notion that combines seller (stock-dependent) inclination towards negotiation  with product quality and demand, and also sometimes, with accompanying services. In addition, opportunities may also vary in time depending on the merchant reaction to clientele changes \cite{G05}.

As mathematical modelling is concerned, various attempts to capture the behavioural dynamics in fresh product wholesale markets have been developed in the literature. In particular, a fully analysable simple model has been proposed for the time evolution of the preference towards certain merchants over others \cite{WKH00}. This dynamics includes dependence on the potential profit to buy from a given merchant, and shows interesting parameter-dependent bifurcations in the phase portrait of the asymptotic functioning modes. However, the profits are assumed to be constant in time and the model does not include variable feedback from the merchants. 

Besides, various agent-based detailed models have been introduced that incorporate several elements of the buyer-seller interactions, including feedback and bargaining. The accompanying numerical simulations have shown proved capacity to adequately reproduce the salient characteristics of the participants behaviour, such as persistent price dispersion and high loyalty \cite{HK95,KV00}. The numerics also have provided quantitative evaluations of various bargaining strategies \cite{CCB09,MR08}. Yet, the proposed high level of detail in these models prevents one from a mathematically rigorous and complete analysis.

\subsection{Essential characteristics of fresh product wholesale markets}
As a complement to previous studies, this paper aims to introduce and to study a perception-driven model of the dynamics of the buyer populations in fresh product wholesale markets, in presence of an adaptive feedback from the merchants. To some extent, the model can be regarded as an extension of the one in \cite{WKH00} - where buyers are partly faithful and partly sensitive to merchant's attractiveness - which includes dependence of this attractiveness on the (relative) volumes of clientele. 

Of particularly importance for our purpose, the model should be sufficiently simple to be amenable to a full mathematical description, and in particular, to identify the mechanisms that are responsible for the asymptotic-in-time phenomenology. Accordingly, the modelling process can only retain a limited number of features of these markets. Aiming at incorporating feedback in the dynamics, the following characteristics have been selected among the numerous {\em in situ} observations and interviews of the actors involved at Rungis market \cite{C10}.
\begin{itemize}
\item Every day, buyers arrive at the (say fruits and vegetables\footnote{For the sake of concurrence, we consider a single category of goods, which has to be broad enough in order to allow for a buyer to purchase goods of different types at several sellers the same day. For instance, the fruits and vegetables' category is a typical example.}) market. Starting with their regular seller(s) but also probing for best opportunity, the buyers inquire various sellers in order to evaluate multiple factors associated with potential profit, such as price, quality, willingness towards negotiation, accompanying services, etc. 
\item Even though a buyer purchases at a given (possibly their preferred) seller, he/she may also buy an additional part of their supply - some good(s) of different type in the same category in this case - at another seller, when a better opportunity occurs. Such purchases at various sellers on the same day is a typical intermediate pattern of switching behaviour from a preferred merchant at previous days to another one in the subsequent days. 
\item Transactions prices are not disclosed and remain private. Such confidentiality\footnote{However, market authorities survey prices based on  declarations by a representative sample of the wholesalers. The collected prices on a day are made public the next day, as a general information to the buyers.} leaves ample room for negotiations. Negotiations are affected by various extrinsic factors such as demand, (current and forthcoming) stock sizes and the prices paid by the wholesalers to their providers\footnote{Wholesalers may decide by themselves the prices paid to their providers, after their stock has been sold. This rule, usually adopted in the case of a long term trustworthy relationship with a provider, gives more flexibility and stability to wholesalers, who can adapted their own profit accordingly.}. 
\item Behaviours are influenced by the assessment of activity/clientele volume in the market. On one hand, a busy and active market signals the buyers that good opportunities might fade away if they do not conclude their purchases swiftly. In particular, a merchant with large clientele is perceived as being more dominant, and thus less prone to negotiation. On the other hand, a wholesaler with little clientele, especially when compared to the overall clientele at their competitors, is likely to increase their attractiveness. 
\end{itemize}

\subsection{Definition of the model}\label{S-DEF}
The modelling process in this paper does not intend to provide a faithful description of the economic/commercial aspects related to the functioning of fresh product wholesale markets. It more modestly aims at investigating how basic day-to-day actions and reactions of the various buyers and wholesalers, as they result from basic perceptions, can impact the volumes of clientele in the long term. 

Given the markets' main features presented above, the two populations are modelled as follows. 
\begin{itemize}
\item Wholesalers are represented by an intrinsic number, called {\bf attractiveness}, that encapsulates the multiple factors mentioned above of merchants' attraction power. In absence of solid information about prices, and given  that buyers are themselves professionals (retailers, restaurant managers/employees, etc) who may adapt their own prices to their clients {\em a posteriori} in order to maintain profit, we postulate that such a comprehensive notion might be more relevant  than the mere price when it comes to convincing a buyer to purchase at a given merchant.
\item Buyers are characterised by the (relative) {\bf volumes} of their population at each merchant. No other feature is envisaged. In particular, changes in merchant's attractiveness will only be affected by clientele volumes and will not depend on any other element such as the amount of demand or the stock size. In other words, this simplifying assumption considers that clientele size is the main element of buyer population in the markets under consideration, and hence that buyer demand and stock sizes are, in first approximation, directly correlated to such volume. 
\end{itemize}

\paragraph{Formal definition of the dynamical system under consideration.} 
The model itself is a (discrete time) dynamical system whose variables are quantifiers of the clientele volumes and of the wholesaler attractivenesses. A formal definition can be defined as follows. The market is assumed to be composed by $N\in \Z^+=\{1,2,\cdots\}$ competing merchants of a single product category, and to open at repetitive and regular events, typically every working day. The events are labelled by the discrete variable $t\in\N=\{0,1,2,\cdots\}$. 

The volume of clientele of merchant $i\in\{1,\cdots ,N\}$ at day $t$ is given by the fraction $p_i^t\in [0,1]$ of the total buyer population present in the market that day. Importantly, buyers can purchase distinct goods of the same category at distinct merchants the same day; hence the total of clientele fractions $\sum_{i=1}^Np_i^t$ can be {\sl a priori} any number in $[0,N]$, and it usually differs from the total buyer population, {\sl ie.}\  we do not impose the exclusive condition $\sum_{i=1}^Np_i^t=1$. 

The attractiveness of merchant $i$ at day $t$ is given by the positive coefficient $a_i^t\in\R_\ast^+$. When $a_i^t>1$, merchant $i$ is competitive and attracts prospecting customers; the higher $a_i^t$, the stronger the attraction. On the opposite, $a_i^t<1$ means a non-competitive seller that repels their clientele; the lower $a_i^t$, the stronger the repulsion. For the sake of the continuous dependence of the dynamics on the attractiveness, we also assume that in the intermediate case $a_i^t=1$, the merchant is neutral, {\sl ie.}\ it does not attract nor repel its clientele. Such an intermediate neutral status of a merchant is a consequence of the mathematical abstraction that has otherwise little relevance in modelling.

In summary, the state of the market on day $t$ is given by the $2N$-dimensional vector 
\[
(\mathbf p^t,\mathbf a^t)\in M_N:=[0,1]^N\times (\R_\ast^+)^N,
\]
where $\mathbf p^t=(p^t_1,\cdots ,p^t_N)$ and $\mathbf a^t=(a^t_1,\cdots ,a^t_N)$ are the variables of the dynamical system under consideration. 
Once the variables have been defined, the rule that governs their time evolution can be specified. Here, this rule is given by the following iterations that are inspired by the market characteristics exposed above
\begin{equation}
\left\{\begin{array}{l}
p_i^{t+1}=f_{\alpha,a_i^{t+1}}(p_i^{t})\\
a_i^{t+1}=a_i^{t}g(p_i^{t},\frac1{N}\sum_{j=1}^Np_j^{t})
\end{array}\right.\ \text{for}\ i\in\{1,\cdots ,N\},
\label{DEFDYNAM}
\end{equation}
where $\alpha\in [0,1)$ and the parametrized one-dimensional maps $f_{\alpha,a}$ are defined by
\[
f_{\alpha,a}(p)=\alpha p+(1-\alpha) f_{a}(p).
\]
The various assumptions on $f_a$ and $g$ are presented and discussed below. 

For convenience, we may regard the iterations \eqref{DEFDYNAM} as the repeated action of some multidimensional map, {\sl viz.}\ equation \eqref{DEFDYNAM} implicitly defines a map $F:M_N\to M_N$ such that $(\mathbf p^{t+1},\mathbf a^{t+1})=F(\mathbf p^t,\mathbf a^t)$ for all $t\in\N$. In this context, we recall that the {\bf orbit} that starts at $t=0$ ({\sl ie.} first day) from a given initial condition $(\mathbf p^0,\mathbf a^0)$ is the sequence $\{(\mathbf p^t,\mathbf a^t)\}_{t\in\N}$ defined by $(\mathbf p^t,\mathbf a^t)=F^t(\mathbf p^0,\mathbf a^0)$.

\paragraph{Interpretation of the iteration rule \eqref{DEFDYNAM}.}
The expression in the first row of \eqref{DEFDYNAM} considers that a fraction $\alpha$ of the  population is loyal and systematically returns to each seller they purchased at the previous day, independently of the latter' attractiveness (term $\alpha p_i^t$). The loyalty parameter $\alpha$ is fixed once for all and remains unchanged throughout the paper. For the sake of notation, we shall not mention any explicit dependence on this parameter. 

The remaining $1-\alpha$ fraction of the population is sensitive to merchants' influence power. As the term $(1-\alpha) f_{a_i^{t+1}}(p_i^t)$ stipulates, the volume of influenceable buyers who purchase at merchant $i$ at $t+1$ not only depends on the attractiveness $a_i^{t+1}$ on that day, but is also updated according to some contagion effect ({\em ie.}\ the updated volume depends on $p_i^t$). Such contagion is driven by mimetic behaviours and this is expressed in the assumptions on $f_a$ below. In few words, these assumptions (together with the convex combination in the definition of $f_{\alpha,a}$) imply that the fraction of buyers at an attractive wholesaler must increase, with stronger effect when the attractiveness is higher (and conversely this volume must decrease, again with attractiveness dependent amplitude, in the case of a repulsive seller).

Notice that if the attractiveness $a_i^t=a_i>1$ did not depend on $t$, then, starting from any $p_i^0\in [0,1)$, the sequence $\{p_i^t\}_{t\in\N}$ obtained by iterating the map $f_{\alpha,a_i}$ would be increasing and converging to 1, indicating gradual convergence to maximal clientele for the wholesaler under consideration. Conversely, if $a_i<1$, the sequence would be decreasing and converging to 0, meaning buisness collapse in this case.

The expression in the second row of \eqref{DEFDYNAM} considers that merchants update their attractiveness according to the volumes of their proper clientele $p_i^t$ and of the mean clientele $\frac1{N}\sum_{j=1}^Np_j^{t}$. More precisely, together with condition \eqref{INEQG} below, this expression implies that we must have $a_i^{t+1}>a_i^t$ when the difference of these volumes is negative ({\sl viz.}\ when the proper clientele is smaller than the mean attendance) and $a_i^{t+1}<a_i^t$ when the difference is positive. 

These seller reactions regulate the dynamics through negative feedback. Indeed, the first row of \eqref{DEFDYNAM} and the assumptions on $f_a$ indicate that the clientele at the wholesaler with the highest attractiveness is likely to become the largest one (as long as that attractiveness remains the highest). However, the second row implies that from the day this happens, that wholesaler must reduce their attractiveness which hence may eventually cease to the be highest one, especially given that the opposite variation occurs at the wholesaler with lowest attractiveness. 

In other words, dominant and dominated sellers, as well as the volumes of their clientele, are prone to alternate in time, and such oscillatory behaviours is the hallmark of a negative regulatory feedback.\footnote{Negative feedback and subsequent possible oscillations due to the assumptions on the $f_a$ are more easily apprehended in the case $N=2$. Assume that we have $a_1^t<1<a_2^t$ for a number of consecutive values of $t$. Over this time interval, $p_1^t\searrow 0$ and $p_2^t\nearrow 1$. Depending on the interval length and on the values of $p_i^t$ at the beginning of this interval, we may eventually have $p_1^t<p_2^t$. This in turn implies $a_1^t\nearrow$ and $a_2^t\searrow$. If this trend also lasts sufficiently long, then we may eventually have $a_2^t<1<a_1^t$ and the whole process may repeat with the role of 1 and 2 exchanged.} Of note, properties of multidimensional dynamical systems implying such negative regulation have been investigated in the literature. In particular, the presence of a negative feedback loop in the underlying regulation graph has been identified as a necessary condition for a stationary state/periodic orbit to be stable \cite{G98,KST07,PMO95}. 

The goal of the analysis presented below is to investigate the long term behaviours as $t\to +\infty$ and in particular, to determine whether the outlined oscillations perdure or if they can be eventually damped so that the system converges to a stationary state. 

\paragraph{Assumptions on the maps $f_a$.} The parametrized maps $f_a$ in the contagion term of the first row of \eqref{DEFDYNAM} (which are all defined from $[0,1]$ into itself) must be such that (see Fig.\ \ref{GRAPHFA} for an illustration)
\begin{itemize}
\item $f_a(p)>p$ for every $p\in [0,1)$ and $f_a(1)=1$ when $a>1$, 
\item $f_a(p)<p$ for every $p\in (0,1]$ and $f_a(0)=0$ when $a<1$ (and $f_1=\text{Id}$).
\item $f_{a_1}(p)<f_{a_2}(p)$ for every $a_1<a_2\in \R_\ast^+$ and $p\in (0,1)$.
\end{itemize}
The first assumption here implies that, for every $a>1$, the point $p=1$ is a globally attracting fixed point of the map $f_a$. The second assumption implies a similar property for $p=0$ when $a<1$. The third assumption implies that the corresponding convergence rates depend monotonically on $a$. 

For the sake of analysis and simplicity, the following technical assumptions are also required.
\begin{itemize}
\item For every $a\in\R_\ast^+$, $f_a$ is an increasing $C^2$ map on $[0,1]$. Moreover, we have
\[
f'_a(0)=f'_{\frac1{a}}(1)=a,\ \forall a\in (0,1)\quad\text{and}\quad \sup_{a\in (0,1)}\max\{|f''_a(0)|,|f''_\frac1{a}(1)|\}<+\infty.
\]
\end{itemize}
The role of the assumption on $f_a'(0)$ and $f_\frac1{a}(1)$ is to ensure that, when $a<1$, the fixed point $p=0$ is exponentially stable, and similarly for $p=1$ when $a>1$. As long as $f_a'(0),f_\frac1{a}(1)<1$,  the exact values of $f_a'(0)$ and $f_\frac1{a}(1)$ do not matter. The choice $f'_a(0)=f'_{\frac1{a}}(1)=a$ is merely a matter of parametrization.
\begin{itemize}
\item For every $x\in [0,1]$, the map $a\mapsto f_a(x)$ is continuous in $\R_\ast^+$. The map $a\mapsto \|f'_a\|_\infty:=\max_{x\in [0,1]}f'_a(x)$ is continuous  in $\R_\ast^+$.
\end{itemize}
In particular, all the conditions above hold for the following simple quadratic example (see fig.\ \ref{GRAPHFA} for illustrations)\footnote{Instead of $0.9$ any constant in $(0,1)$ could have been chosen in this example. We believe that the closer this constant is to 1, the more elaborated is the transient phenomenology of the system \eqref{DEFDYNAM}.}
\begin{equation}
f_a(x)=\left\{\begin{array}{ccl}
a x + 0.9(1 - a) x^2&\text{if}&a\leq 1\\
1-\frac1{a}(1-x)-0.9(1-\frac1{a})(1-x)^2&\text{if}&a\geq 1
\end{array}\right.
\label{EXAMPLFA}
\end{equation}
\begin{figure}[ht]
\begin{center}
\includegraphics*[width=60mm]{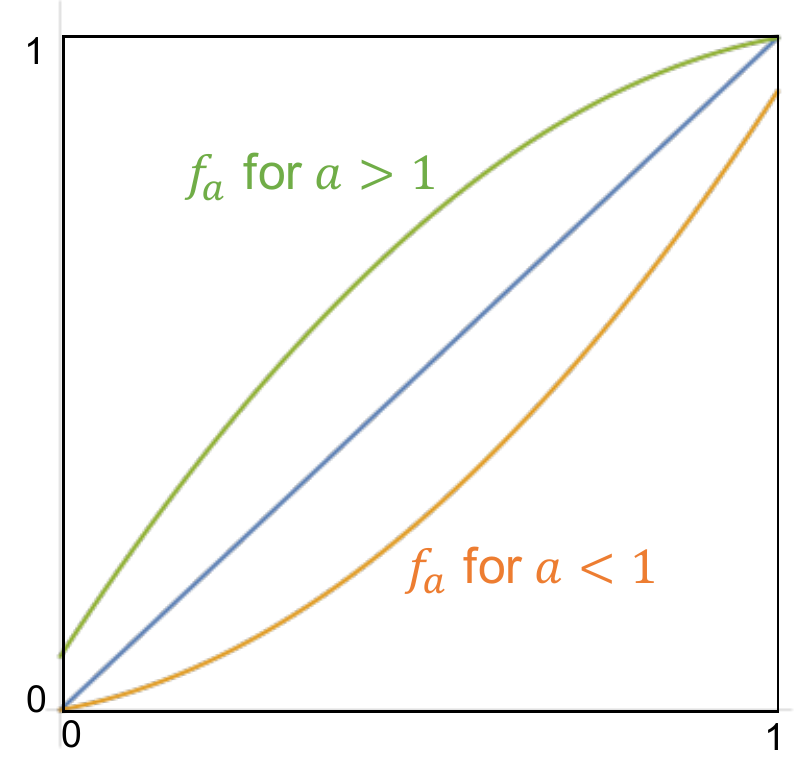}
\end{center}
\caption{Illustration of graphs of the parametrized functions $f_a$ defined by \eqref{EXAMPLFA}.}
\label{GRAPHFA}
\end{figure}

\paragraph{Assumptions on the map $g$.} 
 The map $g:[0,1]^2\to \R^+_\ast$ must be a continuous map such that $g(p,p)=1$ for all $p\in [0,1]$ and
\begin{equation}
(g(p,q)-1)(q-p)>0,\ \forall p\neq q\in [0,1].
\label{INEQG}
\end{equation}
This inequality is a condensed way to simultaneously impose that $g(p,q)>1$ when $q>p$ and $g(p,q)<1$ when $q<p$. 

Additional conditions will be imposed below - namely conditions \eqref{UPBOUNDG} and \eqref{CONCAV} - in order to control the asymptotic dynamics of the system \eqref{DEFDYNAM}, in particular the fixed points' local stability and the global features of the dynamics. 

Strictly speaking, we do not need to exclude that $g$ vanishes on $[0,1]^2$. It suffices that $g$ remains positive in all orbits of \eqref{DEFDYNAM}, namely that we have
\begin{equation}
g(p_i,\frac1{N}\sum_{i=1}^Np_i)>0, \ \forall \mathbf p\in [0,1]^N,\ i\in\{1,\cdots ,N\}\ \text{and}\ N\geq 2.
\label{POSITIVEG}
\end{equation}
This weaker assumption allows one to consider the following linear example
\begin{equation}
g(p,q)=1+q-p.
\label{LINEAREXAMP}
\end{equation}

\section{Main results}\label{S-MAINRES}
\subsection{Presentation of the results}
As it is standard in dynamical systems, the analysis of the system \eqref{DEFDYNAM} aims at describing the asymptotic behaviour as $t\to +\infty$ of every orbit, leaving out exceptional orbits with unstable or irrelevant behaviour. Given the negative nature of the regulatory process involved here, emphasis will be put on identifying those conditions that ensure that the oscillatory behaviours eventually cease, and convergence to a fixed point always holds (again leaving out exceptional orbits). 

The analysis begins with preliminary considerations that provide a basis for the subsequent presentation of the results. In particular, an inversion symmetry is identified that allows us to first consider an asymmetric set of initial conditions and then to extend the conclusions to their symmetric counterpart. 
Moreover, all fixed points are determined, and as potential attractors of the dynamics, focus will be made on those continuous families of fixed points with either $\mathbf p=\mathbf 0$ or $\mathbf p=1$, and which are parametrized by $\mathbf a$. 

The first result (Proposition \ref{STABP1}) is about local stability of these families of fixed points. More precisely, the statement claims that, provided that merchants' reactivity to clientele variations is bounded (condition \eqref{UPBOUNDG}), the continuum of fixed points with $\mathbf p=\mathbf 0$ is locally asymptotically stable, namely it attracts the orbit of every initial condition in its neighbourhood. Thanks to the inversion symmetry, the continuum with $\mathbf p=\mathbf 1$ is also locally asymptotically stable provided that a symmetric analogue to \eqref{UPBOUNDG} holds.  In other words, condition \eqref{UPBOUNDG} (and the symmetric one) imply that the individual merchants' attraction/repulsion power dominates feedback regulation when near enough to the fixed points. Oscillations in the orbits must be eventually damped and asymptotic convergence takes place.

The second result (Proposition \ref{INSTAB}) puts emphasis on  the importance of condition \eqref{UPBOUNDG} for the stability of the continuum of fixed points with $\mathbf p=\mathbf 0$ (NB: By symmetry, a similar discussion holds for the fixed points with $\mathbf p=1$). More precisely, this statement provides a counter-example, namely a map $g$ not satisfying  \eqref{UPBOUNDG} and for which this continuum of  fixed points  is unstable. In that counter-example, merchants have unbounded reactivity (meaning that they are hyper reactive) when their clientele volume is very small, a feature that saves them from total collapse.

Leaving the counter-example aside, and assuming again that condition \eqref{UPBOUNDG} holds, the third and last result, Theorem \ref{MAINRES}, which is the main result of the paper, completes the local stability result of Proposition \ref{STABP1} by a global stability one. It {\sl a priori} not clear that \eqref{UPBOUNDG}, which is a local condition, suffices to control the (oscillatory) dynamics when starting away from $\mathbf p=\mathbf 0$. 

Accordingly, Theorem \ref{MAINRES} identifies an additional condition on $g$, condition \eqref{CONCAV}, that curbs the mean attractiveness growth rate, and which, together with \eqref{UPBOUNDG}, ensures global convergence to a fixed point. More precisely, the statement claims convergence for every orbit except perhaps when an attractiveness $a_i^t$ approaches 1 arbitrarily close.\footnote{In this case, we cojecture that the variations should be become arbitrarily slow although they may not vanish. } We believe that such exceptions, when some wholesaler are (infinitely often close to being) neither attractive nor repulsive, are non-generic, even though they cannot be totally excluded.\footnote{In particular, such exceptions may happen when in the boundaries of the basins of attractions of the fixed points $\mathbf 0$ and $\mathbf 1$.} In any case, our study concludes that other than these exceptions, the population individual stability always dominates the feedback regulation to impose convergence in this system, provided that the wholesalers' reactivity remains bounded and that the attractiveness mean grow rate is not positive.

\subsection{Preliminary properties}
\paragraph{Well-defined dynamics.} 
The convex combination in the first equation of \eqref{DEFDYNAM}, together with the property $f_a([0,1])\subset [0,1]$, and the inequality \eqref{POSITIVEG} imply that for every initial condition $(\mathbf p^0,\mathbf a^0)\in M_N$, the subsequent orbit $\{(\mathbf p^t,\mathbf a^t)\}_{t\in\N}$ is well-defined, and we have $(\mathbf p^t,\mathbf a^t)\in M_N$ for all $t\in\N$. 
  
\paragraph{Identification of the fixed points.} 
Equation \eqref{DEFDYNAM} implies that the $\mathbf p$-coordinates of every fixed point $F(\mathbf p,\mathbf a)=(\mathbf p,\mathbf a)$ must satisfy
\[
g(p_i,\frac1{N}\sum_{i=1}^Np_i)=1, \ \forall i\in\{1,\cdots ,N\}.
\]
The condition \eqref{INEQG} then imposes that these $\mathbf p$-coordinates must be all equal. Moreover, $\alpha\neq 1$ imposes that, for each $i$, the coordinate $p_i$ must be a fixed point of the map $f_{a_i}$. Given the conditions on the maps $f_a$, this in turn implies that we must have $p_i=0$ if $a_i<1$, $p_i=1$ if $a_i>1$ and $p_i$ is arbitrary in $[0,1]$ if $a_i=1$. 

Let $\mathbf 0=(0,\cdots ,0)$ and $\mathbf 1=(1,\cdots ,1)$. It results from the reasoning here that the fixed points of $F$ are of the following form
\begin{itemize}
\item $(\mathbf 0,\mathbf a)$ for any $\mathbf a\in (0,1]^N$,
\item $(\mathbf 1,\mathbf a)$ for any $\mathbf a\in [1,+\infty)^N$,
\item $(\mathbf p,\mathbf 1)$ for any $\mathbf p\in [0,1]^N$.\footnote{Notice that the point $(\mathbf 0,\mathbf 1)$ also belongs to the first family and $(\mathbf 1,\mathbf 1)$ also belongs to the second one.}
\end{itemize}
The fixed points with neutral attractiveness $\mathbf a=\mathbf 1$ are a consequence of the mathematical properties of the family $f_a$ (continuity with respect to $a$ in particular) and have little significance for market modelling, also because the $\mathbf p$-coordinates are indifferent. In fact, these points are particular homogeneous points whose dynamics reduces to a one-dimensional dynamical system (synchronized dynamics, see below) and they are unstable fixed points of the synchronized dynamics. 

On the other hand, and despite that the $\mathbf a$-coordinates are indifferent,\footnote{Such indifference is a consequence of the fact that once the attractiveness exceeds 1 (or lies below 1), no selection mechanism for that quantity exists in the model \eqref{DEFDYNAM}.} fixed points with $\mathbf p=\mathbf 0,\mathbf 1$ and all $\mathbf a$-coordinates distinct from 1 play a central role in the dynamics because they are natural candidates for attractive fixed points (Recall the assumptions above on the one-dimensional maps $f_a$ and the subsequent exponential stability of the fixed points 0 and 1). Accordingly, the aim of the analysis below is to identify additional conditions on $g$ that ensure that these families of points are first locally stable, and then globally stable, in order to evacuate permanent oscillations (except perhaps for exceptional orbits). 

Of note, as they can be directly interpreted, these fixed points (total collapse of each wholesaler when $\mathbf p=\mathbf 0$, all buyers purchasing at all merchants when $\mathbf p=\mathbf 1$) do not appear to be fully realistic. This limitation is another consequence of the simplicity of the dynamics, in particular of the choice of the fixed points of the maps $f_a$ themselves. However, when specifying these maps, one could adjust this choice (possibly depending on $N$), in order to curb the minimal and maximal clientele volumes at each merchant, and hence to curb the fixed points coordinates. While such adjustment is highly relevant for modelling, it does not affect the asymptotic behaviours, which can be deduced from the results here by applying a rescaling of the variables.

\paragraph{Synchronized dynamics.}
As a particular instance of coupled map system with mean-field coupling \cite{CF05}, the dynamics \eqref{DEFDYNAM} commutes with the simultaneous permutations of the $\mathbf p$- and $\mathbf a$-coordinates, {\sl viz.}\ if the sequence $\{(\mathbf p^t,\mathbf a^t)\}_{t\in\N}$ is an orbit of \eqref{DEFDYNAM}, then  for every permutation $\pi$ of $\{1,\cdots ,N\}$, the sequence $\{(\pi \mathbf p^t,\pi \mathbf a^t)\}_{t\in\N}$ where 
\[
\pi \mathbf p=(p_{\pi(1)},\cdots,p_{\pi(N)})\quad\text{and}\quad \pi\mathbf a=(a_{\pi(1)},\cdots ,a_{\pi(N)}),
\]
is also an orbit of this system.

This symmetry implies in particular that the subset of $M_N$ of {\bf homogeneous} points, {\sl viz.}\ when $\mathbf p=(p,\cdots ,p)$ and $\mathbf a=(a,\cdots ,a)$, is invariant under the action of $F$. Moreover, the dynamics in this set - the so-called synchronized dynamics - reduces to the iterations of the one-dimensional map $f_{\alpha,a}$ (since $a$ remains constant). The synchronized dynamics is very simple; for every $a\neq 1$, every orbit asymptotically converges to 0 (when $a<1$) or to 1 (when $a>1$). In the original phase space, this corresponds to the convergence to one of the homogeneous fixed points $(\mathbf 0,\mathbf a)$ or $(\mathbf 1,\mathbf a)$ where again $\mathbf a=(a,\cdots ,a)$. For $a=1$, we have $f_{\alpha, 1}=\text{Id}$, thus every orbit is stationary.

Clearly, the map $F$ is invertible, {\sl ie.}\ $(\mathbf p^t,\mathbf a^t)$ in \eqref{DEFDYNAM} can be uniquely determined by $(\mathbf p^{t+1},\mathbf a^{t+1})$. Moreover, its inverse also commutes with the simultaneous permutations of the $\mathbf p$- and $\mathbf a$-coordinates. As a consequence, no orbit can become synchronized in finite time if its initial condition is not homogeneous.\footnote{Indeed, if, at any given $t\in\N$, the coordinates of $\mathbf p^t$ and $\mathbf a^t$ do not depend on $i$, then the same property must hold for $(\mathbf p^0,\mathbf a^0)$.} 

Accordingly, the description of the synchronized dynamics above cannot apply to non-synchronized orbits. A proper analysis is required in order to determine asymptotic behaviours in this case. Of note, in addition to their application to the proofs of the main results, some of the technical statements in Section \ref{S-PROOF} below provide interesting information about the asymptotic behaviours, see especially section \ref{P-MAINRES}.

\paragraph{Inversion symmetry.} Here, we identify an involution acting in the set of maps $F$ associated with the dynamics \eqref{DEFDYNAM} and which commutes with every such map. The subsequent benefits for the analysis will be also given. 

The involution is based on the following similar transformations for the maps $f_a$ and $g$. 
\begin{itemize}
\item Given a parametrized map $f_a:[0,1]\to [0,1]$, let $\bar f_a$ be defined by 
\[
\bar f_a(x)=1-f_{\frac1{a}}(1-x),\ \forall x\in [0,1],a\in\R_\ast^+.
\]
Clearly, when the family $\{f_a\}_{a\in\R_\ast^+}$ satisfies all assumptions in Section \ref{S-DEF}, the family $\{\bar f_a\}_{a\in\R_\ast^+}$ also satisfies these assumptions. Notice also that we have $\bar f_a=f_a$ for the map defined by \eqref{EXAMPLFA}.
\item Given a map $g:[0,1]^2\to \R_\ast^+$, let $S g$ be defined by 
\begin{equation}
S g(p,q)=\frac1{g(1-p,1-q)},\ \forall (p,q)\in [0,1]^2.
\label{SYMG}
\end{equation}
Again, if $g$ satisfies all assumptions in Section \ref{S-DEF}, then so does the image $Sg$. 
\end{itemize}
Now, there is a one-to-one correspondence between the orbits of \eqref{DEFDYNAM} and those of the dynamical system that results when substituting $f_a$ by $\bar f_a$ and $g$ by $S g$. Namely, the sequence $\{(\mathbf p^t,\mathbf a^t)\}_{t\in\N}$ is an orbit of the former iff $\{(1-\mathbf p^t,\frac1{\mathbf a^t})\}_{t\in\N}$, where 
\[
1-\mathbf p=(1-p_1,\cdots ,1-p_N)\quad \text{and}\quad \frac1{\mathbf a}=(\frac1{a_1},\cdots,\frac1{a_N}),
\]
is an orbit of the latter. 

As the benefits for the analysis are concerned, notice firstly that the stability statements to be given in the sequel depend on some conditions on the map $g$ only. Therefore, for a given statement, if the map $Sg$ also satisfies the same condition, the conclusion for an orbit $\{(\mathbf p^t,\mathbf a^t)\}_{t\in\N}$ immediately extends to the orbit $\{(1-\mathbf p^t,\frac1{\mathbf a^t})\}_{t\in\N}$ of the same system (by applying the statement to the orbit $\{(\mathbf p^t,\mathbf a^t)\}_{t\in\N}$ of the system generated with $\bar f_a$ and $S g$). Specific consequences of this property on fixed point stability will be mentioned below.

\subsection{Local asymptotic stability}
The first equation in \eqref{DEFDYNAM} and the basic properties of the maps $f_a$ imply that in any orbit for which there exists $T\in\N$ such that
\begin{equation}
\sup_{t> T}\max_{i\in \{1,\cdots ,N\}} a_i^t< 1\quad\text{or}\quad \inf_{t> T}\min_{i\in \{1,\cdots ,N\}} a_i^t> 1,
\label{EVENTUALCONTROL}
\end{equation}
all coordinates $p_i^t$ must be monotone in $t$ for $t\geq T$, and hence they must asymptotically converge (to 0 in the first case, to 1 otherwise). 
However, that such behaviours exist is not clear because as outlined before, at every iteration, some of the $\mathbf a$-coordinates must increase and other(s) must decrease (unless the $\mathbf p$-coordinates are all equal). Therefore, in order to make sure that \eqref{EVENTUALCONTROL}  eventually holds, one needs to show that these variations asymptotically vanish and that $\mathbf a^t$ converges (to a point whose coordinates are all distinct from 1).

Our first result claims asymptotic convergence to the continuum of fixed points with $\mathbf p=\mathbf 0$ when starting initially close to that set, provided that the merchants' reactivity to clientele changes, that is the amplitude of the attractiveness variations, remains bounded. As mentioned above, the asymptotic value of $\mathbf a$ is itself hardly predictable from the initial condition (due to degeneracy inside the continuum), because no mechanism for attractiveness selection has been inserted into the dynamics. 
\begin{Pro}
Assume the existence of $K\in \R_\ast^+$ such that\footnote{The value $\frac12$ here is arbitrary. Any value in $(0,1)$ can be chosen.}
\begin{equation}
|g(p,q)-1|\leq K\max\{p,q\},\quad \forall (p,q)\in (0,\frac12)^2.
\label{UPBOUNDG}
\end{equation}
Then for every $\mathbf a^0\in (0,1)^N$, there exists $\epsilon>0$ such that for every $\mathbf p^0\in [0,\epsilon)^N$, the subsequent orbit of \eqref{DEFDYNAM} satisfies
\[
\sup_{t\in\N}\max_{i\in\{1,\cdots ,N\}}a_i^t<1\quad \text{and}\quad \lim_{t\to +\infty}\mathbf a^t\ \text{exists},
\] 
and hence
\[
\mathbf p^t\in [0,\epsilon)^N,\ \forall t\in\N\quad \text{and}\quad \lim_{t\to +\infty}\mathbf p^t=0.
\] 
\label{STABP1}
\end{Pro}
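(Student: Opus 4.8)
The plan is to set up a feedback loop between the two sets of variables and to close it by a simultaneous induction on $t$. As long as every attractiveness stays below a fixed level $M<1$, the first line of \eqref{DEFDYNAM} forces each clientele volume to contract geometrically towards $0$, so the volumes are summable in $t$; and summability of the volumes, fed into \eqref{UPBOUNDG}, keeps each attractiveness below $M$. Neither half holds in isolation, so I would prove the two claims together: for $\epsilon$ small enough, (a) $a_i^t\le M$ for all $i$ and $t$, and (b) $u^t\le v^t$ for all $t$, where $u^t:=\max_i p_i^t$ and $\{v^t\}$ is the scalar comparison orbit defined below.

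First fix the constants. Put $M_0:=\max_i a_i^0<1$, choose $M\in(M_0,1)$, and set $c:=\ln(M/M_0)>0$. Since $f_{\alpha,M}(0)=0$ and $f_{\alpha,M}'(0)=\alpha+(1-\alpha)M<1$, the $C^2$ hypotheses on the maps $f_a$ furnish $\epsilon_0>0$ and $\lambda'\in(0,1)$ with $f_{\alpha,M}(v)\le\lambda' v$ for $v\in[0,\epsilon_0]$; let $v^0:=u^0$ and $v^{t+1}:=f_{\alpha,M}(v^t)$, so that $v^t\le(\lambda')^t\epsilon$. Writing $\bar p^s:=\frac1N\sum_j p_j^s\le u^s$, the inductive step for (b) uses that $f_a(p)$ is increasing in both $a$ and $p$: if $a_i^{t+1}\le M$ for every $i$, then $p_i^{t+1}=f_{\alpha,a_i^{t+1}}(p_i^t)\le f_{\alpha,M}(u^t)$, hence $u^{t+1}\le f_{\alpha,M}(u^t)\le f_{\alpha,M}(v^t)=v^{t+1}$. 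The step for (a) uses $a_i^{t+1}=a_i^0\prod_{s\le t}g(p_i^s,\bar p^s)$; once $\epsilon<\frac12$ keeps $(p_i^s,\bar p^s)$ inside $(0,\frac12)^2$ (extending \eqref{UPBOUNDG} to the boundary by continuity of $g$), the bound \eqref{UPBOUNDG} and $1+x\le e^x$ give
\[
a_i^{t+1}\le M_0\exp\Big(K\sum_{s\le t}u^s\Big)\le M_0\exp\Big(K\sum_{s\ge0}v^s\Big)\le M_0\exp\Big(\frac{K\epsilon}{1-\lambda'}\Big).
\]
Choosing $\epsilon\le\min\{\epsilon_0,\tfrac12,\,c(1-\lambda')/K\}$ makes this at most $M$, so (a) holds at $t+1$ and the induction closes.

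From the closed induction the conclusions follow at once. Part (b) gives $p_i^t\le u^t\le v^t\le(\lambda')^t\epsilon$, so $\mathbf p^t\in[0,\epsilon)^N$ for all $t$ and $\mathbf p^t\to\mathbf 0$; part (a) gives $\sup_t\max_i a_i^t\le M<1$. For the convergence of $\mathbf a^t$, note $\sum_s|g(p_i^s,\bar p^s)-1|\le K\sum_s v^s<\infty$, so each infinite product $\prod_s g(p_i^s,\bar p^s)$ converges to a positive limit and $\lim_t a_i^t$ exists. By the inversion symmetry introduced above, the analogous statement for the continuum $\mathbf p=\mathbf 1$ follows whenever $Sg$ satisfies \eqref{UPBOUNDG}.

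The main obstacle is precisely the circularity between (a) and (b): the geometric contraction rate $\lambda'$ and the margin $c$ must be chosen so that the total volume budget $\sum_s v^s$, which $\epsilon$ controls, never exhausts the attractiveness budget $c$. Once this balance is arranged the induction is self-sustaining, and the two remaining ingredients—the local contraction estimate for $f_{\alpha,M}$ from the $C^2$ assumptions, and the passage from the $|g-1|$ bound to convergence of the product—are routine.
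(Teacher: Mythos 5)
Your proof is correct and takes essentially the same route as the paper's: a simultaneous bootstrap induction (the paper's Claim \ref{LOCAL}) coupling the geometric decay of $\max_i p_i^t$ under a ceiling $\max_i a_i^t\le M<1$ with condition \eqref{UPBOUNDG} keeping the cumulative growth of the $a_i^t$ within a fixed budget, followed by summability of the increments to obtain convergence of $\mathbf a^t$. The only difference is bookkeeping: the paper propagates a shrinking margin $a-\beta^t\zeta$ at each step, whereas you control the product $\prod_{s\le t}\bigl(1+Ku^s\bigr)$ globally by $\exp\bigl(K\sum_s v^s\bigr)$.
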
 
In addition, an analogous result follows by symmetry for the continuum of fixed points with $\mathbf p=1$, when $S g$ defined in \eqref{SYMG} satisfies \eqref{UPBOUNDG}.  That is to say, for every $\mathbf a^0\in (1,+\infty)^N$, there exists $\epsilon>0$ such that for every $\mathbf p^0\in (1-\epsilon,1]^N$, the subsequent orbit of \eqref{DEFDYNAM} (still with $f_{\alpha,a}$ and $g$) satisfies
\[
\inf_{t\in\N}\min_{i\in\{1,\cdots ,N\}}a_i^t>1\quad \text{and}\quad\lim_{t\to +\infty}\mathbf a^t\ \text{exists},
\] 
and hence
\[
\mathbf p^t\in (1-\epsilon,1]^N,\ \forall t\in\N\quad \text{and}\quad \lim_{t\to +\infty}\mathbf p^t=1.
\] 
The proof of Proposition \ref{STABP1} is given in Section \ref{P-STABP1}. As an example, notice that  both the linear map $g$ defined by \eqref{LINEAREXAMP} and its image $Sg$ satisfy condition \eqref{UPBOUNDG}.
\begin{figure}[ht]
\begin{center}
\includegraphics*[width=140mm]{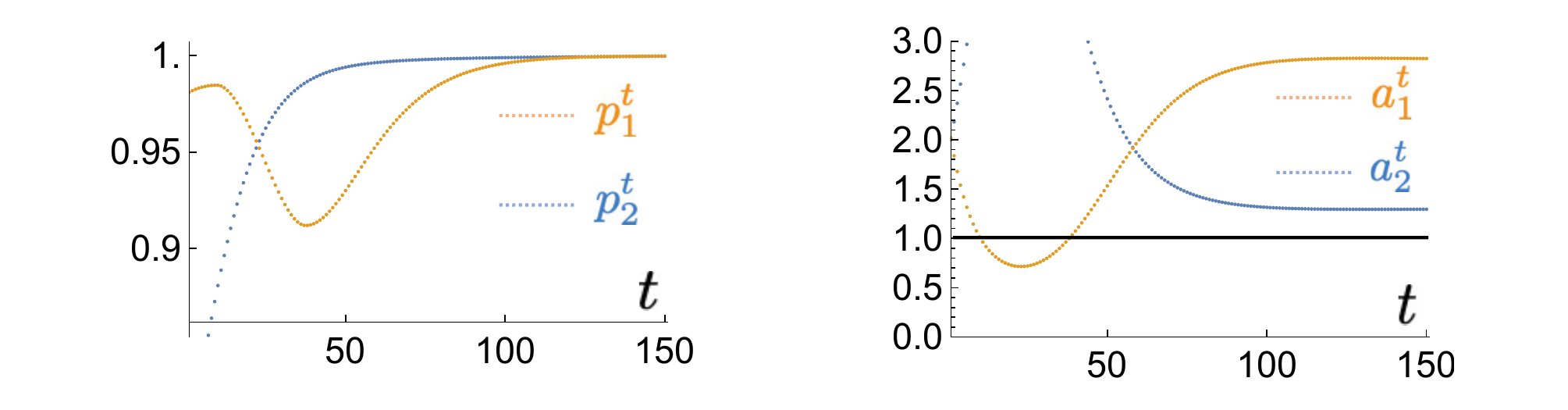}
\end{center}
\caption{Illustration for $N=2$ of the (local) asymptotic stability of the set of fixed points with $\mathbf p=\mathbf 1$. The pictures represent the $\mathbf p^t-$ and $\mathbf a^t-$coordinates time series of the orbit of \eqref{DEFDYNAM} issued from $(\mathbf p^0,\mathbf a^0)=(0.981,0.8,2.02,2)$, for $f_a$ as in \eqref{EXAMPLFA}, $\alpha=0.9$ and $g$ defined by \eqref{LINEAREXAMP}. Prompt convergence towards $(\mathbf 1,\mathbf a)$ (with both $a_i>1$) is evident, despite a short oscillatory transient of $p_1^t$ due $a_1^t$ passing briefly below 1. This example can be regarded as an illustration of the dynamics that results from introducing a concurrent seller (indexed by 2) to an existing isolated one (seller 1) whose initial clientele fraction and attractiveness coefficients are similar, although more favourable. The initial strong increase of $a_2^t$ reveals a rapid reaction from seller 2 in order to attract more customers and to bring the fraction $p_2^t$ above $p_1^t$. When that happens, the roles and the variations of the $a_i^t $ are exchanged and monotone convergence follows.}
\label{GROWTH}
\end{figure}

As market modelling is concerned, Proposition \ref{STABP1} can be interpreted as a robustness statement when close to vanishing or to maximal clientele. In particular, when $Sg$ satisfies \eqref{UPBOUNDG}, it shows that the introduction of an additional merchant into a market where each seller already attracts almost all buyers, does not affect the asymptotic behaviour of the  populations, provided that the newcomer also initially attracts a large part of the population. 

An illustration of the dynamics for $N=2$ when starting initially close to $\mathbf p=\mathbf 1$ with $a_1^0,a_2^0>1$, is given in Fig.\ \ref{GROWTH}. Notice that in this example, the initial condition does not satisfy the conditions of Proposition \ref{STABP1} because $a_1^t<1$ during some transient time interval (say from $t\sim 12$ to $t\sim 27$). This suggests that convergence to $\mathbf p=\mathbf 1$ may hold for a larger set of initial conditions than as prescribed by (the symmetric analogue of) Proposition \ref{STABP1}, when the subsequent orbit shows that some merchants temporary become repulsive.

\subsection{Role of condition \eqref{UPBOUNDG} on fixed points stability}\label{S-INSTAB}
Condition \eqref{UPBOUNDG} in Proposition \ref{STABP1} and its analogue for the map $Sg$ are essential for the local stability of the continua of fixed points with $\mathbf p=\mathbf 0,\mathbf 1$. Indeed, the following counter-example of the map $g:(0,1]\times [0,1]\to \R_\ast^+$ defined by
\begin{equation*}
g(p,q)=\frac{q}{p},
\end{equation*}
shows that instability can result when this condition fails.\footnote{Independently, on $(0,1]\times [0,1]$, this map satisfies all the assumptions imposed on $g$ in Section \ref{S-DEF}. In this case, the dynamics of \eqref{DEFDYNAM} is well-defined in $(0,1]^N\times (\R_\ast^+)^N$, which is invariant.}
\begin{Pro}
For every $\mathbf a^0\in (0,1)^N$ and every $\mathbf p\in (0,1)^N$ such that $(\mathbf p,\mathbf a^0)$ is not synchronized, there exist $T\in\N$ and $\Delta>0$ such that for every $\delta\in (0,\Delta)$ we have
\[
\max_{i\in\{1,\cdots ,N\}}a_{i}^{T}> 1,
\]
for the orbit issued from the initial condition $(\delta \mathbf p,\mathbf a^0)$ and generated by \eqref{DEFDYNAM} with $g(p,q)=\frac{q}{p}$.
\label{INSTAB}
\end{Pro}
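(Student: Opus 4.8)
The plan is to exploit the scale invariance of the attractiveness update when $g=q/p$ and to reduce the problem, in the limit $\delta\to 0$, to a linearised dynamics near $\mathbf p=\mathbf 0$ for which the quantity $\sum_i\ln a_i^t$ is monotone. With $g(p,q)=q/p$ the second line of \eqref{DEFDYNAM} reads $a_i^{t+1}=a_i^t\,\overline p^{\,t}/p_i^t$ with $\overline p^{\,t}:=\tfrac1N\sum_j p_j^t$, which depends only on the ratios of the $p_j^t$ and is therefore unchanged under the global rescaling $\mathbf p\mapsto\delta\mathbf p$. First I would set $\lambda(a):=\alpha+(1-\alpha)a$ and introduce the \emph{limiting dynamics} obtained from \eqref{DEFDYNAM} by replacing the first line with $P_i^{t+1}=\lambda(a_i^{t+1})P_i^t$ while keeping the exact second line, and denote by $(\mathbf P^t,\mathbf A^t)$ its orbit issued from $(\mathbf p,\mathbf a^0)$. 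Using that the $f_a$ are $C^2$ with uniformly bounded second derivative near $0$, so that $f_{\alpha,a}(p)=\lambda(a)p+O(p^2)$ uniformly for $a$ in compact subsets of $(0,1]$, a step-by-step comparison on any finite horizon $t\le T$ gives $a_i^t\to A_i^t$ and $p_i^t/\delta\to P_i^t$ as $\delta\to0$, uniformly in $i$ and in $t\le T$, provided the $A_i^s$ stay in a compact subset of $(0,\infty)$ for $s<T$. Hence it suffices to exhibit a finite $T$ with $\max_i A_i^T>1$ \emph{strictly} in the limiting dynamics: by continuity $\max_i a_i^T>1$ for all $\delta$ below some $\Delta$, which is exactly the asserted form of the conclusion.

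In the limiting dynamics, taking logarithms of the second line and summing over $i$ gives, with $\Sigma^t:=\sum_i\ln A_i^t$,
\[
\Sigma^{t+1}-\Sigma^t=N\Big(\ln\overline P^{\,t}-\tfrac1N\textstyle\sum_i\ln P_i^t\Big)=:N\,D^t\ge 0,
\]
the inequality being arithmetic--geometric mean, with equality iff $\mathbf P^t$ is homogeneous. Since $\max_i\ln A_i^t\ge\tfrac1N\Sigma^t$, it is enough to reach $\Sigma^t>0$ for some finite $t$, i.e.\ $\sum_{s<t}D^s>-\Sigma^0/N$. Because $(\mathbf p,\mathbf a^0)$ is not synchronised, and no non-homogeneous orbit synchronises in finite time (as recalled in the preliminary properties), the shape $\mathbf P^t$ is non-homogeneous for infinitely many $t$, so each such $D^t>0$; the whole statement thus reduces to showing that the series $\sum_{t\ge0}D^t$ \emph{diverges}.

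To analyse this I would track the deviations $u_i^t:=\ln P_i^t-\tfrac1N\sum_j\ln P_j^t$ and $v_i^t:=\ln A_i^t-\tfrac1N\Sigma^t$ from the homogeneous shape. A direct computation (the monopole $D^t$ cancels) yields the \emph{exact} relation $v_i^{t+1}=v_i^t-u_i^t$, so $u_i^t=v_i^t-v_i^{t+1}$ and the second difference of $v$ obeys a conservative oscillator-type recursion
\[
v_i^{t+2}-2v_i^{t+1}+v_i^t=-\big(\ln\lambda(A_i^{t+1})-\tfrac1N\textstyle\sum_j\ln\lambda(A_j^{t+1})\big),
\]
whose right-hand side is a strictly increasing, hence genuinely restoring, function of $v_i^{t+1}$. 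Linearising at the homogeneous state gives the $2\times2$ iteration with matrix
\[
\begin{pmatrix}1&-1\\ \gamma&1-\gamma\end{pmatrix},\qquad \gamma=\frac{(1-\alpha)\overline A}{\alpha+(1-\alpha)\overline A}\in(0,1),
\]
of determinant $1$ and complex eigenvalues of modulus $1$: the homogeneous shape is a \emph{centre}, the oscillations are marginally stable at linear order, and $D^t\asymp\|u^t\|^2$ does not decay to first order. The main obstacle is precisely here: since the linearisation is a centre, I must exclude a nonlinear spiralling-in that would render $\sum_t D^t$ convergent. I would do so by proving that the homogeneous shape is not asymptotically stable for the conservative recursion above — for instance by producing an almost conserved quantity $Q^t$ (the invariant quadratic form of the centre) whose nonlinear correction has a favourable sign, so that $Q^t$, and with it $\|u^t\|^2$ and $D^t$, stay bounded below along a subsequence once the non-synchronisation has endowed the oscillator with positive initial ``energy''. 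Granting this, $\Sigma^t\to+\infty$, which delivers the finite crossing time $T$ needed in the first paragraph; the transfer back to small $\delta$ is then the routine continuity argument already set up.
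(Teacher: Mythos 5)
Your overall architecture coincides with the paper's: pass to a scale-invariant comparison system near $\mathbf p=\mathbf 0$ over a finite horizon (your $\lambda(a)$-system is in fact the faithful limit of $p_i^t/\delta$ as $\delta\to 0$, whereas the paper compares with the $\alpha=0$ linearisation \eqref{LINEARIZED}), then observe that $\Sigma^t=\sum_i\ln A_i^t$ (equivalently the paper's product $\pi^t=\prod_ia_i^t$) is non-decreasing by AM--GM, with a strict increment $N D^t>0$ exactly when the $\mathbf P$-coordinates are not all equal. The finite-horizon transfer back to small $\delta$ and the reduction of the conclusion to $\Sigma^T>0$ for some finite $T$ are both sound. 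So the entire proposition rests on your claim that $\sum_{t}D^t$ diverges.

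That is precisely where the proposal has a genuine gap. You correctly compute that the linearised shape dynamics $(u_i^t,v_i^t)$ is a centre (unit-modulus complex eigenvalues), so nothing can be concluded at linear order, and you then invoke ``an almost conserved quantity $Q^t$ whose nonlinear correction has a favourable sign.'' That sign is never computed, and there is no a priori reason it should be favourable: for a map whose linearisation is a centre, stability or instability of the fixed shape is decided by higher-order (normal-form) coefficients, which here depend on the unspecified second-order data of $f_a$ and on $\alpha$, and either sign can occur; in addition your oscillator is non-autonomous (the frequency $\gamma$ involves $\overline A^t$, which drifts as $\Sigma^t$ grows) and is $2(N-1)$-dimensional for $N>2$. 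If the shape spiralled into homogeneity fast enough, $\sum_tD^t$ would converge and $\Sigma^t$ could remain negative for all time, defeating the conclusion. The paper sidesteps this difficulty entirely: in its comparison system \eqref{LINEARIZED} the ratios $\rho_i^t=p_i^t/p_1^t$ and $\gamma_i^t=a_i^t/a_1^t$ satisfy $\rho_i^{t+3}=1/\rho_i^t$ and $\gamma_i^{t+3}=1/\gamma_i^t$, hence are exactly $6$-periodic, so the initial non-homogeneity recurs with fixed amplitude, the increment of $\ln\pi^t$ is bounded below along $t\in 6\N$, and $\pi^t$ grows geometrically until it exceeds $1$. Your more faithful $\lambda(a)$-dynamics destroys this exact periodicity, so to complete the argument you must either genuinely prove the non-spiralling claim (a delicate normal-form computation that could fail for some admissible $f_a$) or find a substitute conserved/recurrent structure playing the role of the paper's period-$6$ orbit.
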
 
As before, an analogous conclusion holds by symmetry for the system generated by \eqref{DEFDYNAM} with $g(p,q)=\frac{1-q}{1-p}$. The proof of Proposition \ref{INSTAB} is given in Section \ref{P-INSTAB}.

As market modelling is concerned, Proposition \ref{INSTAB} can be interpreted as merchants' resilience against clientele collapse. Unlike when \eqref{UPBOUNDG} holds, reactive changes in the attractiveness coefficients can become sufficiently large to counteract population decay, and to make the continuum of points with $\mathbf p=\mathbf 0$ being a repeller, {\sl viz.}\ no matter how small the initial fractions $p_i^0$ are, there exists a time for which at least one wholesaler must be attractive, implying that the corresponding fraction must increase, at least for one iteration.

For completeness, we mention that numerical simulations of this example with $N=2$ (see Fig. \ref{DYNAMICSN2}) actually show that the instability is so strong that the two attractiveness $a_1^t$ and $a_2^t$ eventually exceed 1 and stay above that value. As discussed in the previous subsection, this implies convergence towards a fixed point with $\mathbf p=\mathbf 1$. Notice that the symmetric map $Sg$ satisfies \eqref{UPBOUNDG} in this example, implying that the corresponding continuum is locally asymptotically stable (statement symmetric to Proposition \ref{STABP1}). 
\begin{figure}[ht]
\begin{center}
\includegraphics*[width=140mm]{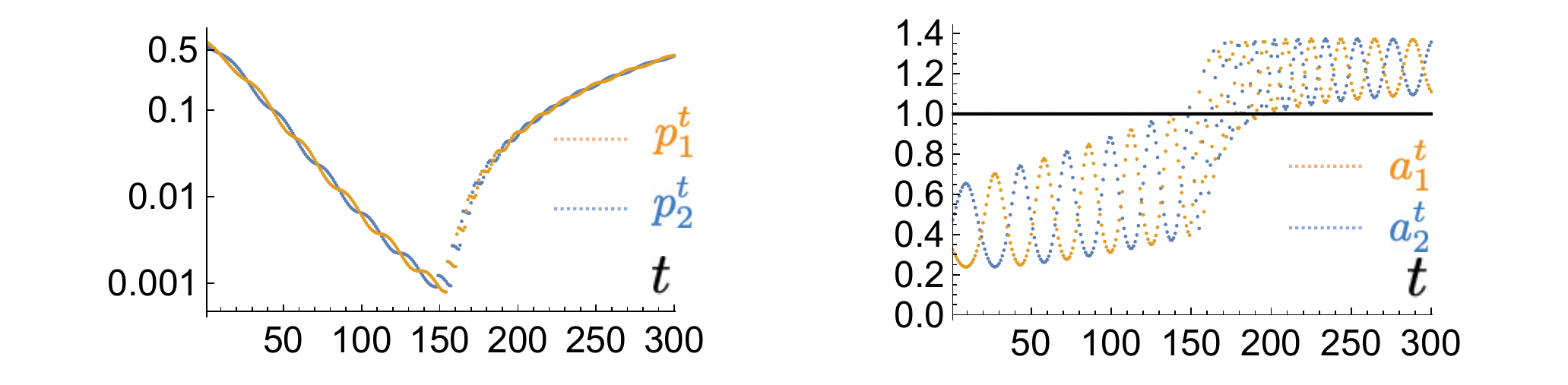}
\end{center}
\caption{Illustration for $N=2$ of the instability of the set of fixed points with $\mathbf p=\mathbf 0$ for $g(p,q)=\frac{q}{p}$ which does not satisfies the condition \eqref{UPBOUNDG} ((NB: Notice however that $Sg$ satisfies \eqref{UPBOUNDG}). The pictures represent the $\mathbf p^t-$ and $\mathbf a^t-$coordinates time series of the orbit issued from $(\mathbf p^0,\mathbf a^0)=(0.546,0.616,0.473,0.324)$, for $f_a$ as in \eqref{EXAMPLFA} and $\alpha=0.9$. Notice the logarithmic scale on the left picture. The figure clearly indicates that a transient oscillating instability of the $a_i^t$ occurs that eventually brings the two coefficients above 1. Accordingly, after a first phase of exponential decay to 0, both coordinates $p_i^t$ start to grow and eventually converge to 1 (as expected when both $a_i^t>1$).}
\label{DYNAMICSN2}
\end{figure}

\subsection{Asymptotic behaviour of orbits}
Back to assuming that $g$ satisfies \eqref{UPBOUNDG}, it remains to investigate the dynamics when starting away from the continua of fixed points with $\mathbf p=\mathbf 0$ and with $\mathbf p=\mathbf 1$. This question is motived in particular by the example in Fig.\  \ref{GROWTH} which suggests that asymptotic convergence to a fixed point may (also) hold for initial conditions no close enough to these continua, and thus not satisfying the assumptions of Proposition \ref{STABP1}. 

However, because of the merchant's negative feedback, oscillatory behaviours may occur over arbitrary large time intervals if the coefficients $a_i^t$ cross 1 arbitrarily many times. In particular, this may be the case when close to the boundary of the fixed points' basins. Furthermore, we do not exclude that, exceptionally, oscillations last forever (when in addition to infinitely many crossings of 1, the coefficients $a_i$ approach 1). 

The central result of this paper integrates all these considerations at once, and states that every orbit of \eqref{DEFDYNAM} must asymptotically converge, and hence that oscillations must be evnetually damped, unless the attractiveness coefficients do not remain away from 1. In addition to \eqref{UPBOUNDG}, this result requires that $g$ satisfies some weak form of concavity, more precisely that its mean value in phase space cannot exceed 1.
\begin{Thm}
In addition to satisfying condition \eqref{UPBOUNDG}, assume that $g$ satisfies that following concavity inequality
\begin{equation}
\frac1{N}\sum_{i=1}^Ng(p_i,\frac1{N}\sum_{i=1}^Np_i)\leq 1,\ \forall\mathbf p\in (0,1)^N.
\label{CONCAV}
\end{equation}
Then, for every orbit $\{(\mathbf p^t,\mathbf a^t)\}_{t\in\N}$ of \eqref{DEFDYNAM}, we have
\[
\sup_{t\in\N}\max_{i\in\{1,\cdots ,N\}}a_i^t<+\infty.
\]
Moreover, every orbit that satisfies  
\[
\liminf_{t\to +\infty}\min_{i\in\{1,\cdots ,N\}}|a_i^t-1|>0.
\]
must asymptotically converge to a fixed point or, possibly, to a ghost fixed point\footnote{A ghost fixed point is a point of the form $(\mathbf 0,\mathbf a)$ where $a_i=0$ for some $\in\{1,\cdots ,N\}$. The term {\sl ghost} comes from the fact that such points belong to $\overline{M_N}\setminus M_N$. In other words, ghost fixed points can be regarded as fixed points of the extension of the map $F$ to the closure $\overline{M_N}$.}.
\label{MAINRES}
\end{Thm}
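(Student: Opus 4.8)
The plan is to split the statement into its two assertions—uniform boundedness of the attractivenesses, and convergence under the non-degeneracy hypothesis—and to treat them in that order, since the second relies on the first. The starting point for boundedness is that condition \eqref{CONCAV} makes the product $V_t:=\prod_{i=1}^N a_i^t$ non-increasing: from the second row of \eqref{DEFDYNAM} one has $V_{t+1}/V_t=\prod_i g(p_i^t,\frac1N\sum_j p_j^t)$, and by the arithmetic--geometric mean inequality this product is bounded above by the $N$-th power of the arithmetic mean, which is $\le 1$ by \eqref{CONCAV}; hence $V_t\le V_0$ for all $t$. This alone does not bound $\max_i a_i^t$, since one coordinate could blow up while another collapses, so I would rule this out using the negative feedback quantitatively: if $a_i^t$ is large then iterating the first row of \eqref{DEFDYNAM} drives $p_i^t$ close to $1$ within a bounded number of steps (by monotonicity of $f_{\alpha,a}$ in $a$ and the global attractivity of $1$ for $f_{\alpha,a}$ when $a>1$), so $p_i^t$ eventually exceeds the mean and $g(p_i^t,\cdot)<1$ forces $a_i^t$ back down, while \eqref{UPBOUNDG} controls the opposite extreme by preventing the small coordinates from collapsing fast enough to subsidise an unbounded spread of the factors of $V_t$. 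Making this ``large $a_i\Rightarrow p_i>\text{mean}$'' mechanism uniform in $t$ and in the identity of the dominant merchant is the technical content of this step.

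\textbf{Convergence.} Assume now $\liminf_t\min_i|a_i^t-1|>0$, so there are $\delta>0$ and $T_0$ with $|a_i^t-1|\ge\delta$ for $t\ge T_0$ and all $i$. Since $V_t$ is non-increasing and bounded below, it converges to some $V_\infty\ge 0$; combined with the first part, when $V_\infty>0$ each $a_i^t$ lies in a fixed compact $[c,C]\subset(0,\infty)$, hence so does each ratio $g(p_i^t,\cdot)$. Convergence of the telescoping product $\prod_t(V_{t+1}/V_t)=V_\infty/V_0$ forces the geometric means $G_t\to1$, and the \emph{stability} of the arithmetic--geometric mean inequality on $[c,C]$ (where $A_t-G_t\to0$ forces the spread of the factors to $0$) yields $g(p_i^t,\frac1N\sum_j p_j^t)\to1$ for all $i$. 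By \eqref{INEQG} and $g(p,p)=1$ this gives $p_i^t-\frac1N\sum_j p_j^t\to0$, so the factors eventually stay clear of the values a crossing of $1$ would require and the sign configuration of $(a_i^t-1)_i$ stabilises. On the indices with $a_i^t>1$ the first row of \eqref{DEFDYNAM} makes $p_i^t$ strictly increasing, hence convergent, and the gap $a_i^t>1+\delta$ forces the limit $1$; symmetrically $p_i^t\to0$ where $a_i^t<1$. A mixed configuration is impossible, since it would send the mean to an interior value, whence an index with $a_i>1$ would have $p_i^t\to1$ strictly above the mean, forcing $g(p_i^t,\cdot)$ bounded below $1$ and $a_i^t\to0$, contradicting $a_i^t>1+\delta$. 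Thus either all $p_i^t\to0$ or all $p_i^t\to1$.

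\textbf{From $\mathbf p$ to $\mathbf a$, and the ghost case.} It remains to upgrade convergence of $\mathbf p^t$ to convergence of $\mathbf a^t$, i.e.\ to sum the multiplicative increments. When all $p_i^t\to0$ this is exactly where \eqref{UPBOUNDG} pays off: exponential stability of $0$ for $f_{\alpha,a}$ with $a<1-\delta$ gives geometric decay $p_i^t\le C\rho^t$, so \eqref{UPBOUNDG} yields $\sum_t|g(p_i^t,\frac1N\sum_j p_j^t)-1|\le K\sum_t\max_j p_j^t<\infty$; the infinite product converges to a positive limit and $\mathbf a^t\to\mathbf a^\infty\in(0,1]^N$, a genuine fixed point $(\mathbf 0,\mathbf a^\infty)$, the residual possibility $a_i^\infty=0$ (the boundary case $V_\infty=0$) being the announced \emph{ghost} fixed point. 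On the side where all $p_i^t\to1$ the symmetric bound is \emph{not} available: \eqref{UPBOUNDG} controls $g$ only near $0$, and \eqref{CONCAV} is not preserved by the inversion symmetry $S$ (by the arithmetic--harmonic mean inequality it maps to the reverse inequality), so the argument cannot simply be transported. Here I would instead derive convergence of each $a_i^t$ from \emph{eventual monotonicity}: if the sign of $\frac1N\sum_j p_j^t-p_i^t$ is eventually constant (the clientele ordering freezes) then $a_i^t$ is eventually monotone and bounded, hence convergent, with a damped-alternation argument based on the $\ell^2$ control $\sum_t(g(p_i^t,\cdot)-1)^2<\infty$ from the arithmetic--geometric mean gap as a fallback when no clean sign is available.

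\textbf{Main obstacle.} The genuinely hard step is this last conversion of $\mathbf p^t\to\mathbf 1$ into convergence of $\mathbf a^t$. Mere continuity of $g$ gives only $g(p_i^t,\cdot)-1=o(1)$ per step, which need not be summable, and the negative feedback tends to flip the sign of $\frac1N\sum_j p_j^t-p_i^t$, so the increments need not keep one sign; taming this oscillation—through stabilisation of the clientele ordering or a monotone-plus-alternating decomposition controlled by the arithmetic--geometric mean stability of \eqref{CONCAV} together with the geometric clientele rates—is where the real work lies, and it is made delicate precisely by the structural asymmetry between the $\mathbf p=\mathbf 0$ and $\mathbf p=\mathbf 1$ ends of the system.
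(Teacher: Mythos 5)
Your skeleton coincides with the paper's (Proposition \ref{LASTP}): the product $\pi^t=\prod_i a_i^t$ is non-increasing by AM--GM and \eqref{CONCAV}; when its limit is positive, convergence of the telescoping product plus the stability of AM--GM forces $g(p_i^t,\frac1N\sum_j p_j^t)\to 1$ for each $i$, hence $p_i^t-\frac1N\sum_j p_j^t\to 0$; the gap $|a_i^t-1|\ge\delta$ together with $a_i^{t+1}/a_i^t\to1$ freezes the sign configuration, mixed configurations are excluded exactly as in Claim \ref{EQCONV}~(i), and the $\mathbf p\to\mathbf 0$ branch is closed by the summability supplied by \eqref{UPBOUNDG}. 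Two substantial pieces are missing, though. First, the uniform bound $\sup_t\max_i a_i^t<+\infty$ is only sketched; your mechanism (``large $a_i^t$ drives $p_i^t$ above the mean in boundedly many steps'') is not substantiated and is delicate when several attractivenesses are large simultaneously, and \eqref{UPBOUNDG} is not what controls this in the paper. The paper argues by contradiction: if $a_{i_1}^{t_k}\to\infty$ then $p_{i_1}^{t_k}\to1$, while convergence of the non-increasing product forces some $a_{i_0}^{t_k}\to0$, which in turn forces \emph{all} $p_i^{t_k}\to0$ --- contradiction. Second, and more seriously, the case $\lim_t\pi^t=0$ is essentially untreated in your convergence step: your AM--GM stability argument needs the factors to live in a compact subset of $(0,+\infty)$, i.e.\ $\pi^\infty>0$, and relegating $\pi^\infty=0$ to a parenthetical about ghost fixed points does not close it. This case is where most of the paper's work in part~(iii) goes: one selects the subsequence realizing the liminf so that in addition $a_{i_0}^{t_k-1}\ge a_{i_0}^{t_k}$, which by \eqref{INEQG} forces $p_{i_0}^{t_k}\ge\frac1N\sum_i p_i^{t_k}$; combined with the implication $a_{i_0}^{t_k}\to0\Rightarrow p_{i_0}^{t_k}\to0$ this yields $\mathbf p^{t_k}\to\mathbf 0$, then $\limsup_k a_i^{t_k}<1$ for all $i$, and Proposition \ref{STABP1} applied at a suitable $t_k$ finishes the proof.

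On the point you single out as the main obstacle --- upgrading $\mathbf p^t\to\mathbf 1$ to convergence of $\mathbf a^t$ --- you have correctly identified a place where the hypotheses give no modulus of continuity for $g$ at $(1,1)$; the paper itself is terse here and concludes this branch by invoking Claim \ref{CONVFA}, which only controls the $\mathbf p$-coordinates. But your proposed repairs do not close it: the $\ell^2$ bound $\sum_t(g(p_i^t,\cdot)-1)^2<+\infty$ (which does follow from the AM--GM gap) is compatible with divergence of $\sum_t(g(p_i^t,\cdot)-1)$, and the ``eventual freezing of the clientele ordering'' is asserted rather than proved. As it stands, your write-up establishes the dichotomy of the $\mathbf p$-limits only when $\pi^\infty>0$, leaves the $\mathbf a$-boundedness and the $\pi^\infty=0$ case open, and flags but does not resolve the $\mathbf a$-convergence on the $\mathbf p\to\mathbf 1$ side.
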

Theorem \ref{MAINRES} follows from a more technical statement, namely Proposition \ref{LASTP}, which is stated and proved in Section \ref{P-MAINRES}. 

The fact that non-convergence to a (ghost) fixed point requires that some attractiveness coefficient(s) must accumulate at 1, namely
 \[
\liminf_{t\to +\infty}\min_{i\in\{1,\cdots ,N\}}|a_i^t-1|=0,
\]
suggests that this may only occur for exceptional initial conditions.\footnote{How exceptional such initial conditions are can be apprehended by considering the one-dimensional synchronized dynamics for which $a_i^t$ and $p_i^t$ do not depend on $i$. In this case, we have $\lim_{t\to +\infty} a^t=1$ iff $a^0=1$.} No such behaviour has ever been observed in simulations.

For instance, the linear map defined in \eqref{LINEAREXAMP} satisfies both conditions \eqref{UPBOUNDG} and \eqref{CONCAV}. As before, a similar statement holds in the case that $Sg$ satisfies \eqref{UPBOUNDG} and \eqref{CONCAV}. In particular, when $g(p,q)=\frac{q}{p}$, the map $Sg$ also satisfies these two conditions (see in particular Section \ref{P-INSTAB}). Up to a minor adjustment (see footnote 17), the proof of Theorem \ref{MAINRES} also applies in this example. 

As application to modelling is concerned, Theorem \ref{MAINRES} states that, possibly after a transient of variable duration, the market must eventually stabilize to a stationary functioning mode, unless some sellers eventually become repeatedly indifferent, an event that we believe to be highly unlikely. 

Moreover, as illustrated in Fig.\ \ref{GLOBALBEHAV}, transient oscillatory phases can be particularly long when starting initially close to the boundary of a fixed point's basin. Hence, stabilization might be hardly observed in practice, and one may only see oscillations, if the observation window is not long enough. Besides, it is also interesting to note that, again when close to a basin's boundary, the asymptotic state, whether $\mathbf p=\mathbf 0$ or $\mathbf p=\mathbf 1$, is sensitive to the initial condition. In other words, the asymptotic functioning mode might be hardly predictable from the first days of the market in this case. 
\begin{figure}[ht]
\begin{center}
\includegraphics*[width=140mm]{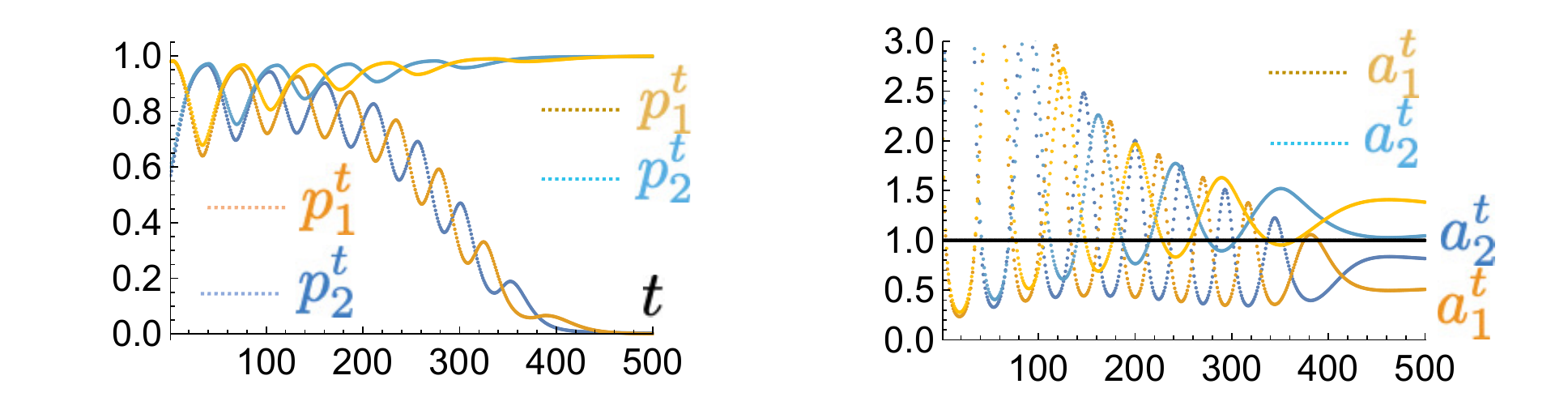}
\includegraphics*[width=140mm]{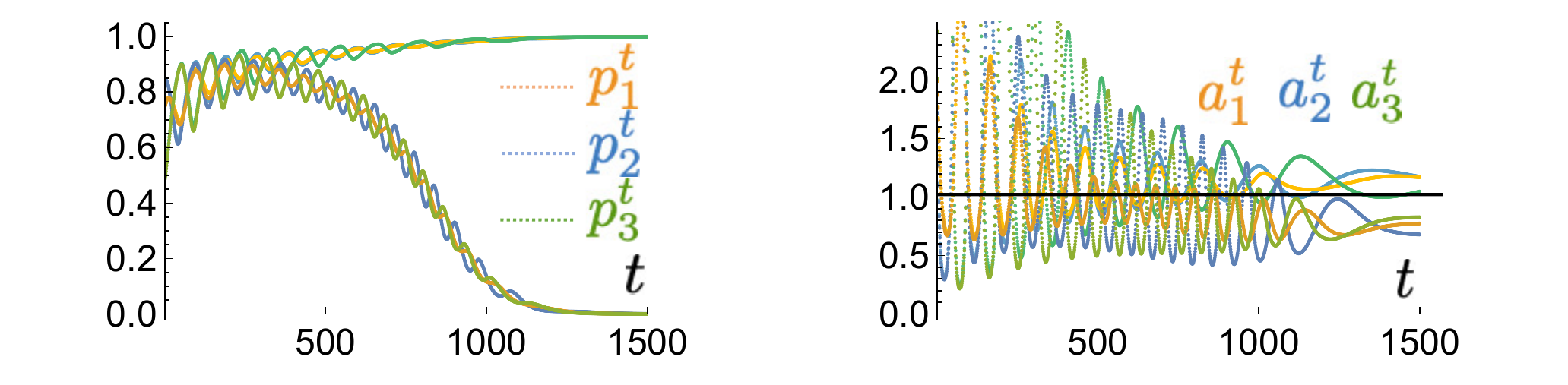}
\end{center}
\caption{Illustration for $N=2$ (top) and $N=3$ (bottom) of the transient oscillatory behaviours when initially close to the boundary of the fixed points' basins. Similar time series as in previous figures but now for two orbits issued from close initial conditions across the boundary of the fixed points' basins: For $N=2$, $a_2^0=0.57$ for the orbit that converges to $\mathbf p=\mathbf 0$ (darker colours) and $a_2^0=0.6$ for the other orbit (lighter colours). The other initial coordinates and the parameters are as in Fig.\ \ref{GROWTH}. For $N=3$, $p_3^0=0.487$ for the orbit that converges to $\mathbf p=\mathbf 0$ and $p_3^0=0.497$ for the other orbit. All other coordinates and parameters are equal. As application to market modelling is concerned, the figure indicates that the outcome that results from introducing a new competitor in the market - either extinction of the buyer population or convergence to maximal population - can be highly sensitive to small variations in the initial coordinates and may only emerge after a transient of substantial duration.}
\label{GLOBALBEHAV}
\end{figure}

Condition \eqref{CONCAV} can be interpreted as a kind of weak global dissipation of the attractivenesses. We know that, in non-synchronized orbits, some of the attractivenesses must grow and other must decrease between day $t$ and day $t+1$. Condition \eqref{CONCAV} ensures that, on average, decay or stagnation dominates. We suspect that Theorem \ref{MAINRES}, and in particular the first claim that the coefficients $\mathbf a^t$ remain bounded in every orbit, may no longer hold when \eqref{CONCAV} is not satisfied. 

\section{Stability analysis and proofs of the statements}\label{S-PROOF}
\subsection{Proof of Proposition \ref{STABP1}}\label{P-STABP1}
We only prove the first statement of the Proposition, since the one that involves $Sg$ immediately follows by applying the global symmetries, using that the stability of the continuum of fixed points with $\mathbf p=\mathbf 1$ for the original $f_a$ and $g$ is equivalent to the one with $\mathbf p=\mathbf 0$ for the system with $\bar f_a$ and $Sg$. 

The proof starts with the following crucial bootstrap statement that will also serve for future purposes.
\begin{Claim}
Given $a\in (0,1)$ and $\beta\in (\alpha+a(1-\alpha),1)$, there exists $\zeta_{a,\beta}>0$ such that for every $t\in\N$ such that 
\[
\left\{\begin{array}{l}
p_i^t\leq \frac{(1-\beta)\zeta}{K(a-\zeta)}\\
a_i^t\leq a-\zeta
\end{array}\right.\quad \text{for}\ i\in\{1,\cdots ,N\},
\]
where $K$ is given by the condition \eqref{UPBOUNDG} and $\zeta\in (0,\zeta_{a,\beta})$, we have 
\[
\left\{\begin{array}{l}
p_i^{t+1}\leq \frac{(1-\beta)\beta\zeta}{K(a-\beta\zeta)}\\
a_i^{t+1}\leq a-\beta \zeta
\end{array}\right.\quad \text{for}\ i\in\{1,\cdots ,N\}.
\]
\label{LOCAL}
\end{Claim}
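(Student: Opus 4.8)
The plan is to read Claim \ref{LOCAL} as a single-step bootstrap whose iteration will later drive the orbit into the fixed-point continuum. Writing $P(\zeta):=\frac{(1-\beta)\zeta}{K(a-\zeta)}$, the hypotheses read $p_i^t\le P(\zeta)$ and $a_i^t\le a-\zeta$, while the conclusions read $p_i^{t+1}\le P(\beta\zeta)$ and $a_i^{t+1}\le a-\beta\zeta$, i.e.\ precisely the same hypotheses with $\zeta$ replaced by $\beta\zeta$. I set $\gamma:=\alpha+a(1-\alpha)$, which is the slope at $0$ of $f_{\alpha,a}$ and satisfies $\gamma<1$; by hypothesis $\beta\in(\gamma,1)$, so this interval is nonempty. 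I would fix the threshold $\zeta_{a,\beta}$ only at the very end, once the two estimates below have dictated how small it must be, and I would treat the attractiveness coordinate first and the population coordinate second, since the latter uses the bound obtained on the former.

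For the attractiveness, I would start from $a_i^{t+1}=a_i^t\,g(p_i^t,\bar p^t)$ with $\bar p^t=\frac1N\sum_j p_j^t$. Choosing $\zeta_{a,\beta}$ small enough that $P(\zeta)\le\frac12$ lets me apply \eqref{UPBOUNDG} (extended to the closed square by continuity of $g$, to cover the degenerate case $p_i^t=0$), which together with $\max\{p_i^t,\bar p^t\}\le P(\zeta)$ gives $g(p_i^t,\bar p^t)\le 1+K\,P(\zeta)$. The whole point of the specific value $P(\zeta)=\frac{(1-\beta)\zeta}{K(a-\zeta)}$ is that $K\,P(\zeta)=\frac{(1-\beta)\zeta}{a-\zeta}$, so multiplying by $a_i^t\le a-\zeta$ (all quantities positive, $g>0$ by \eqref{POSITIVEG}) yields
\[
a_i^{t+1}\le (a-\zeta)\Bigl(1+\tfrac{(1-\beta)\zeta}{a-\zeta}\Bigr)=a-\beta\zeta,
\]
which is the desired attractiveness bound, with no slack to spare.

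The population bound is where the real difficulty lies. The naive estimate contracts $p_i^t$ only by the factor $\beta$, but $\beta\,P(\zeta)>P(\beta\zeta)$ because $a-\beta\zeta>a-\zeta$, so a contraction rate equal to $\beta$ is just too weak. I would fix this in two moves. First, since I have shown $a_i^{t+1}\le a-\beta\zeta\le a<1$, the monotonicity of $a\mapsto f_a(p)$ gives $f_{a_i^{t+1}}(p_i^t)\le f_a(p_i^t)$, hence $p_i^{t+1}\le f_{\alpha,a}(p_i^t)$; this collapses the whole family onto the single fixed map $f_{\alpha,a}$ and removes any need for uniform-in-$a$ control. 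Second, because $f_{\alpha,a}(0)=0$ and $f_{\alpha,a}'(0)=\gamma<\beta$, I can pick $\eta\in(\gamma,\beta)$ and, by differentiability at $0$, a radius $\rho>0$ with $f_{\alpha,a}(p)\le\eta p$ for all $p\in[0,\rho]$. Requiring $P(\zeta)\le\rho$ then gives $p_i^{t+1}\le\eta\,p_i^t\le\eta\,P(\zeta)$, and a short computation shows that $\eta\,P(\zeta)\le P(\beta\zeta)$ is equivalent to $\zeta\le\frac{a(\beta-\eta)}{\beta(1-\eta)}$, a positive constant. Thus I would take $\zeta_{a,\beta}$ to be the smallest of the thresholds ensuring $P(\zeta)\le\min\{\rho,\frac12\}$ and $\zeta\le\frac{a(\beta-\eta)}{\beta(1-\eta)}$; for every $\zeta<\zeta_{a,\beta}$ both coordinate estimates then hold, completing the bootstrap. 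The one delicate point to double-check is that the gap $\beta-\eta>0$ genuinely absorbs the second-order growth of $f_{\alpha,a}$ near $0$ uniformly down to $p=0$, which is exactly what differentiability at the fixed point $0$ provides.
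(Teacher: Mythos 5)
Your proposal is correct and follows essentially the same route as the paper's proof: the attractiveness bound is obtained identically from \eqref{UPBOUNDG} and the identity $(a-\zeta)\bigl(1+\tfrac{(1-\beta)\zeta}{a-\zeta}\bigr)=a-\beta\zeta$, and the population bound uses monotonicity in $a$ plus the linearization $f_{\alpha,a}(p)\leq \eta p$ near $0$ with $\eta\in(\alpha+a(1-\alpha),\beta)$, your explicit threshold $\zeta\leq\frac{a(\beta-\eta)}{\beta(1-\eta)}$ being equivalent to the paper's condition $1-\epsilon\leq\frac{a-\zeta}{a-\beta\zeta}$ with $1-\epsilon=\eta/\beta$. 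Your extra remark on extending \eqref{UPBOUNDG} to $p_i^t=0$ by continuity is a detail the paper leaves implicit.
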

\noindent
{\sl Proof of the Claim.} Let $\zeta_{a,\beta}>0$ be such that $\frac{(1-\beta)\zeta}{K(a-\zeta)}<\frac12$ for all $\zeta\in (0,\zeta_{a,\beta})$. The definition of $a_i^{t+1}$ and inequality \eqref{UPBOUNDG} imply that we have 
\[
a_i^{t+1}\leq a_i^t\left(1+\frac{(1-\beta)\zeta}{a-\zeta}\right)\leq a-\beta\zeta
\]
Moreover, given the choice of $\beta$, let $\epsilon>0$ be sufficiently small so that $\alpha+(1-\alpha)(a+\epsilon)\leq (1-\epsilon)\beta$. By continuity of the derivative $f'_a$ and the fact that $f'_a(0)=a$, let $\zeta_{a,\beta}>0$ be even smaller if necessary so that we have 
\[
f_{a_i^{t+1}}(x)\leq f_{a}(x)\leq (a+\epsilon)x,
\]
when $x\leq \frac{(1-\beta)\zeta_{a,\beta}}{K(a-\zeta_{a,\beta})}$. Let again $\zeta_{a,\beta}$ be even smaller if necessary so that we also have 
\[
1-\epsilon\leq \frac{a-\zeta}{a-\beta\zeta},\quad\text{for all}\ \zeta\in (0,\zeta_{a,\beta}).
\]
Altogether, we then have for every $p_i^t\leq \frac{(1-\beta)\zeta}{K(a-\zeta)}$ with $\zeta\in (0,\zeta_{a,\beta})$ 
\[
p_i^{t+1}\leq \frac{(1-\epsilon)(1-\beta)\beta\zeta}{K(a-\zeta)}\leq \frac{(1-\beta)\beta\zeta}{K(a-\beta\zeta)},
\]
as desired. \hfill $\Box$

\paragraph{Proof of Proposition \ref{STABP1}.} Given $\mathbf a^0\in (0,1)^N$, let $a\in (\max_i a_i^0,1)$, $\beta\in (\alpha+a(1-\alpha),1)$ and 
\[
\zeta< \min \left\{\zeta_{a,\beta},a-\max_i a_i^0\right\}.
\]
Then for $\max_i p^0_i<\frac{(1-\beta)\zeta}{K(a-\zeta)}$, Claim \ref{LOCAL} implies that we have
\[
\sup_{t\in\N}\max_{i\in\{1,\cdots ,N\}}a_i^t\leq a.
\]
Thanks to the properties of the map $f_a$ in the neighbourhood of $0$, this inequality not only implies that all $p_i^t$ tend to 0 but that this convergence is exponential, namely there exists $C\in\R^+$ and $b\in (0,1)$ such that 
\[
\max_{i\in\{1,\cdots ,N\}} p_i^t\leq Cb^t,\quad \forall t\in\N.
\]
According to the inequality \eqref{UPBOUNDG}, this yields the inequality
\[
\max_{i\in\{1,\cdots ,N\}}|a_i^{t+1}-a_i^t|\leq a K Cb^t\quad \forall t\in\N,
\]
which in turns implies that all sequences $\{a_i^t\}_{t\in\N}$ are Cauchy sequences, and hence ensures the existence of the limits $\lim_{t\to\infty}a_i^t$. \hfill $\Box$

\subsection{Proof of Proposition \ref{INSTAB}}\label{P-INSTAB}
The proof relies on the following preliminary statement.
\begin{Claim}
Assume that $\mathbf p^0\in (0,1)^N$ and that $(\mathbf p^0,\mathbf a^0)$ is not synchronized. Then, in the orbit issued from $(\mathbf p^0,\mathbf a^0)$,  there are infinitely many instances of $t\in\N$ such that the coordinates of $\mathbf p^t$ are not all equal.
\label{NONSYNC}
\end{Claim}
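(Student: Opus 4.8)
}
The plan is to argue by contradiction. Suppose the conclusion fails, so that only finitely many $t$ have non-constant $\mathbf p^t$. Then there is $T\in\N$ such that for every $t\geq T$ all coordinates of $\mathbf p^t$ coincide; write $p_i^t=p^t$ for $i\in\{1,\cdots,N\}$ and $t\geq T$. The first observation is that flatness of $\mathbf p^t$ freezes the attractivenesses: since $g(p,q)=\frac{q}{p}$ and $\frac1N\sum_j p_j^t=p^t$, the second row of \eqref{DEFDYNAM} gives $a_i^{t+1}=a_i^t\cdot\frac{p^t}{p^t}=a_i^t$ for every $t\geq T$, whence $\mathbf a^t=\mathbf a^T$ for all $t\geq T$. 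Moreover, because $\mathbf p^0\in(0,1)^N$ and positivity is preserved (as $f_a(p)>0$ for $p>0$ and $1-\alpha>0$), one has $p^t\in(0,1]$ throughout, so the division defining $g$ stays well posed.

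Next I would exploit the $p$-update with the now-frozen attractivenesses. For every $t\geq T$ and every $i$, $p^{t+1}=f_{\alpha,a_i^T}(p^t)$; since the left-hand side is independent of $i$, the value $f_{\alpha,a_i^T}(p^t)$ must be the same for all $i$. Recalling $f_{\alpha,a}=\alpha\,\mathrm{Id}+(1-\alpha)f_a$ with $\alpha<1$, the strict monotonicity in $a$ of the family $f_a$ (third assumption in Section \ref{S-DEF}) shows that $a\mapsto f_{\alpha,a}(p)$ is strictly increasing for every fixed $p\in(0,1)$. Hence, if $p^t\in(0,1)$, the equalities $f_{\alpha,a_i^T}(p^t)=f_{\alpha,a_j^T}(p^t)$ force $a_i^T=a_j^T$ for all $i,j$. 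At the remaining boundary value $p^t=1$ one instead has $f_{\alpha,a}(1)=1$ for $a\geq1$ and $f_{\alpha,a}(1)<1$ (still strictly increasing in $a$) for $a<1$, so agreement there forces all $a_i^T\geq1$.

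Both resulting alternatives contradict the hypotheses. If all $a_i^T$ are equal, then $(\mathbf p^T,\mathbf a^T)$ is homogeneous; since $F$ is invertible with inverse commuting with coordinate permutations, no orbit becomes synchronized in finite time unless it starts so, and thus $(\mathbf p^0,\mathbf a^0)$ would itself be homogeneous, contradicting the non-synchronization assumption. Otherwise the $a_i^T$ are not all equal, in which case the agreement analysis forces $p^t=1$ for every $t\geq T$ with all $a_i^T\geq1$; then $(\mathbf p^t,\mathbf a^t)=(\mathbf 1,\mathbf a^T)$ is a fixed point of $F$ for $t\geq T$, and invertibility (a fixed point being its own unique preimage) propagates backwards to give $(\mathbf p^0,\mathbf a^0)=(\mathbf 1,\mathbf a^T)$, contradicting $\mathbf p^0\in(0,1)^N$. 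This closes the argument.

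I expect the main obstacle to be the careful treatment of the boundary value $p^t=1$. On $(0,1)$ the strict monotonicity of $a\mapsto f_{\alpha,a}$ immediately collapses distinct attractivenesses to a single value, but at $p=1$ the maps $f_{\alpha,a}$ all coincide (equal to $1$) for $a\geq1$, so distinct attractivenesses can survive; ruling that configuration out is precisely what requires invoking invertibility to identify the fixed point $(\mathbf 1,\mathbf a^T)$ and then to contradict $\mathbf p^0\in(0,1)^N$. A secondary point to verify is that the $p$-coordinates stay strictly positive, so that $g(p,q)=\frac{q}{p}$ remains well defined and $p^t\neq0$ throughout.
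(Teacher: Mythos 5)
Your argument is in substance the paper's: the identity $g(p,p)=1$ (here $p^t/p^t=1$) freezes $\mathbf a^t$ whenever $\mathbf p^t$ is constant, strict monotonicity of $a\mapsto f_{\alpha,a}(p)$ for $p\in(0,1)$ transfers any disparity among the $a_i$ to the coordinates of $\mathbf p^{t+1}$, and invertibility of $F$ pushes synchronization back to $t=0$; the paper simply phrases this as ``if $\mathbf p^t$ is constant then $\mathbf p^{t+1}$ is not'' instead of deriving a contradiction from eventual constancy. The one soft spot is your boundary case $p^t=1$. It is in fact vacuous: since $\alpha<1$, $f_a(p)>0$ for $p>0$ and $f_a(p)<1$ for $p<1$, the cube $(0,1)^N$ is invariant, so $\mathbf p^0\in(0,1)^N$ forces $\mathbf p^t\in(0,1)^N$ for all $t$ --- this is exactly how the paper's footnote dismisses $\mathbf p^t\in\{\mathbf 0,\mathbf 1\}$, whereas you only verified $p^t>0$ and left $p^t=1$ open. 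Moreover, as written, your treatment of that case invokes strict monotonicity of $a\mapsto f_{\alpha,a}(1)$ for $a<1$, which the hypotheses on $f_a$ do not supply (the strict ordering $f_{a_1}(p)<f_{a_2}(p)$ is only assumed for $p\in(0,1)$), so ``agreement at $p^t=1$ forces all $a_i^T\ge 1$'' is not immediate. Replacing that detour by the invariance of $(0,1)^N$ makes your proof clean and essentially identical to the paper's.
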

\noindent
{\sl Proof of the Claim.} Assume that all coordinates of $\mathbf p^t$ are equal. Then we must have $\mathbf a^{t+1}=\mathbf a^t$. Moreover, the coordinates of $\mathbf a^t$ cannot be all equal. Otherwise, $(\mathbf p^t,\mathbf a^t)$ would have to be synchronized, and hence $(\mathbf p^0,\mathbf a^0)$ would have to be synchronized too as we showed in Section \ref{S-MAINRES}. 

In addition, that some of the coordinates of $\mathbf a^t=\mathbf a^{t+1}$ are distinct, strict monotonicity of $a\mapsto f_{\alpha,a}(p)$ for every $p\in (0,1)$ and the fact that we must have $\mathbf p^t\not\in \{\mathbf 0,\mathbf 1\}$,\footnote{Indeed, by assumptions on the maps $f_a$ and since $\alpha \in [0,1)$, the only way that $\mathbf p^t\in \{\mathbf 0,\mathbf 1\}$ for $t>0$ is that the same condition holds for $t=0$, which is impossible given the current assumption on $\mathbf p^0$.} imply that some of the coordinates of $\mathbf p^{t+1}$ must be distinct. The Claim then easily follows.\hfill $\Box$
\medskip

For the proof of the Proposition, consider the product $\pi^t:=\prod_{i=1}^Na_i^t$. Expression \eqref{DEFDYNAM} implies that we have
\begin{equation*}
\frac{\pi^{t+1}}{\pi^t}=\frac{\left(\sum_{i=1}^Np_i^t\right)^N}{N^N\prod_{i=1}^Np_i^t},\quad\forall t\in\N.
\end{equation*}
Moreover, the AM-GM inequality exactly claims that
\[
\frac{\left(\sum_{i=1}^Np_i\right)^N}{N^N\prod_{i=1}^Np_i}\geq 1,\quad\forall \mathbf p\in (0,1)^N,
\]
with equality iff all $\mathbf p$-coordinates are equal, viz.\ iff $\frac{p_i}{p_1}=1$ for all $i\in \{2,\cdots ,N\}$. Therefore, in every orbit, the sequence $\{\pi^t\}_{t\in\N}$ is non-decreasing. Actually, together with Claim \ref{NONSYNC}, this argument shows that the sequence cannot remain constant in non-synchronized orbits. Moreover, assume that there exists $\epsilon>0$ such that 
\begin{equation}
\min_{i\in \{2,\cdots ,N\}}|\frac{p^t_i}{p^t_1}-1|\geq \epsilon,
\label{LOBOUNDP}
\end{equation}
for an arbitrary large number of instances of $t$. At these instances, we have $\frac{\pi^{t+1}}{\pi^t}\geq 1+\epsilon'$ for some $\epsilon'>0$, implying altogether that the sequence $\{\pi^t\}_{t\in\N}$ must grow with exponential rate. In particular, one can make sure that $\pi^t>1$, and therefore $\max_ia_i^t>1$ as desired, for $t$ sufficiently large. 

We are going to prove that the lower bound \eqref{LOBOUNDP} holds for an arbitrary large number of instances of $t$, in the system for which the $\mathbf p$-coordinates are iterated using the linearized dynamics at $\mathbf p=\mathbf 0$ for $\mathbf a\in (0,1)^N$ and $\alpha=0$. Then we shall use that under the assumption (otherwise there is nothing to prove)
\[
\max_{i\in\{1,\cdots ,N\}}a_i^t<1,\ \forall t\in\N,
\]
an arbitrary large number of iterates under \eqref{DEFDYNAM} must remain close to the corresponding ones under the linearization, provided that $\mathbf p^0$ lies in a sufficiently small neighbourhood of $\mathbf 0$ (which depends on $\mathbf a^0$ and $\alpha$).   

Given that $f'_a(0)=a$ for $a<1$, for $g(p,q)=\frac{q}{p}$, the system obtained by linearizing the first equation in \eqref{DEFDYNAM} at $\mathbf p^t=\mathbf 0$ and for $\mathbf a^{t+1}\in (0,1)^N$ and $\alpha=0$ is given by 
\begin{equation}
\left\{\begin{array}{l}
p_i^{t+1}=a_i^{t+1}p_i^t\\
a_i^{t+1}=a_i^t\frac{\sum_{i=1}^Np_i^t}{Np_i^t}
\end{array}\right.,\ \forall i\in\{1,\cdots ,N\}.
\label{LINEARIZED}
\end{equation}
Letting $\rho_i^t=\frac{p_i^t}{p_1^t}$ and $\gamma_i^t=\frac{a_i^t}{a_1^t}$, \eqref{LINEARIZED} implies the following iteration rule for the variables $(\rho_2,\cdots ,\rho_N,\gamma_2,\cdots ,\gamma_N)$
\[
\left\{\begin{array}{l}
\rho_i^{t+1}=\gamma_i^{t}\\
\gamma_i^{t+1}=\frac{\gamma_i^t}{\rho_i^t}
\end{array}\right.\quad \forall i\in\{2,\cdots ,N\}
\]
Accordingly, we must have 
\[
\rho_i^{t+3}=\frac1{\rho_i^t}\quad\text{and}\quad \gamma_i^{t+3}=\frac1{\gamma_i^t}\quad \forall t\in\N,
\]
hence, the sequence $\{\rho_i^t\}_{t\in\N}$ must be periodic with period (at most) 6. Therefore, if the coordinates of $\mathbf p^0$ are not all equal, there must exist $i\in\{1,\cdots ,N\}$ such that 
\[
\rho_i^{6k}=\rho_i^0\neq 1,\quad\forall k\in\N,
\]
proving that \eqref{LOBOUNDP} holds for all $t\in 6\N$ as desired. 

Consequently, for every $\mathbf a^0\in (0,1)^N$ and every $\mathbf p^0\in (0,1)^N$ whose coordinates are not all equal, there exist $t(\mathbf p^0,\mathbf a^0)\in\N$ such that 
\[
 \max_{t\in\{0,\cdots ,t(\mathbf p^0,\mathbf a^0)-1\}}\max_{i\in\{1,\cdots ,N\}}a_i^{t}\leq 1<\max_{i\in\{1,\cdots ,N\}}a_i^{t(\mathbf p^0,\mathbf a^0)},
\]
for the orbit generated by \eqref{LINEARIZED}.  Notice that $t(\epsilon\mathbf p,\mathbf a)$ does not depend on $\epsilon\in \R_\ast^+$ because \eqref{LINEARIZED} commutes with the homogeneous scaling $\epsilon\mapsto \epsilon\mathbf p$ of the $\mathbf p$-coordinates.

Let $F_\text{lin}$ be the multidimensional map associated with \eqref{LINEARIZED}, {\sl ie.} $(\mathbf p^{t+1},\mathbf a^{t+1})=F_\text{lin}(\mathbf p^t,\mathbf a^t)$. Given $\mathbf x=(x_1,\cdots ,x_{N})\in\R^{N}$ let 
\[
\|\mathbf x\|_N:=\max_{i\in\{1,\cdots ,N\}}|x_i|.
\]
\begin{Claim}
Given $\mathbf a\in (0,1)^N$, $t\in\N$ and $\epsilon>0$, there exists $\delta>0$ such that for every $\mathbf p\in (0,1)^N$ satisfying $\|\mathbf p\|_N<\delta$, we have 
\[
\max_{k\in \{1,\cdots ,t\}}\|F^k(\mathbf p,\mathbf a)-F_\text{lin}^k(\mathbf p,\mathbf a)\|_{2N}<\epsilon,
\]
provided that all $\mathbf a$-coordinates of each iterate $F_\text{lin}^k(\mathbf p,\mathbf a)$ lie in $(0,1)$.
\label{CONTORBIT}
\end{Claim}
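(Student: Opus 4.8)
The plan is to exploit that, for $g(p,q)=q/p$, the maps $F$ and $F_\text{lin}$ share \emph{the same} $\mathbf a$-update $a_i^{k+1}=a_i^k\frac{\sum_j p_j^k}{Np_i^k}$ and differ \emph{only} in the $\mathbf p$-update, where $F$ applies the nonlinear map $f_{a_i^{k+1}}$ while $F_\text{lin}$ applies its linearization $p\mapsto a_i^{k+1}p$ at $p=0$. Since $f_a(0)=0$, $f_a'(0)=a$ and $f_a$ is $C^2$ with second derivative bounded near $0$ uniformly in $a\in(0,1)$ (by the technical assumptions of Section \ref{S-DEF}), a routine Taylor estimate gives $|f_a(p)-ap|\le C p^2$ for all small $p$ and all $a\in(0,1)$, with $C$ independent of $a$. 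Thus the two orbits coincide at $t=0$ and, at each subsequent step, can drift apart only through this quadratic correction, which is then propagated forward.

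The obstacle is that both maps are singular at $\mathbf p=\mathbf 0$ (the $\mathbf a$-update divides by $p_i^k$), and this is precisely where we operate: $\|\mathbf p\|_N<\delta$ is small and, under the proviso $\tilde a_i^k\in(0,1)$, the linearized coordinates $\tilde p_i^{k+1}=\tilde a_i^{k+1}\tilde p_i^k$ keep shrinking toward it. To neutralise the singularity I would rescale. Setting $\sigma:=\|\mathbf p\|_N$ and passing to the variables $\mathbf q^k:=\mathbf p^k/\sigma$ (and $\tilde{\mathbf q}^k$ likewise), the $\mathbf a$-update, being homogeneous of degree $0$ in $\mathbf p$, is unchanged, and $F_\text{lin}$ commutes with the scaling $\mathbf p\mapsto\sigma\mathbf p$ (as already noted in the proof of Proposition \ref{INSTAB}), so $\tilde{\mathbf q}^k$ and $\tilde{\mathbf a}^k$ do not depend on $\sigma$. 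In these variables the $\mathbf p$-update of $F$ reads $q_i^{k+1}=\tfrac1\sigma f_{a_i^{k+1}}(\sigma q_i^k)=a_i^{k+1}q_i^k+\sigma\,O\big((q_i^k)^2\big)$, so that $F$ becomes exactly $F_\text{lin}$ plus a perturbation of size $O(\sigma)$.

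This reduces the statement to a standard finite-time continuous-dependence argument: in the rescaled coordinates $F$ converges to $F_\text{lin}$ uniformly on compacta avoiding $\{q_i=0\}$ as $\sigma\to0$, so its first $t$ iterates converge to those of $F_\text{lin}$. I would make this quantitative by an induction on $k\le t$ propagating the errors $e_q^k:=\|\mathbf q^k-\tilde{\mathbf q}^k\|_N$ and $e_a^k:=\|\mathbf a^k-\tilde{\mathbf a}^k\|_N$: the identical $\mathbf a$-update yields $e_a^{k+1}\le L(e_a^k+e_q^k)$, while the $\mathbf p$-update yields $e_q^{k+1}\le L'(e_a^{k+1}+e_q^k)+C\sigma$, the Lipschitz constants $L,L'$ arising from differentiating the degree-$0$ homogeneous $\mathbf a$-update and hence depending on a lower bound for the $q_i^k$. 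Solving this linear recursion with zero initial error and $O(\sigma)$ source gives $e_q^k,e_a^k\le C_t\sigma$ for $k\le t$, whence $\|\mathbf p^k-\tilde{\mathbf p}^k\|_N=\sigma e_q^k=O(\sigma^2)$ and $\|\mathbf a^k-\tilde{\mathbf a}^k\|_N=O(\sigma)$; taking $\delta$ small then forces the $\|\cdot\|_{2N}$-error below $\epsilon$.

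I expect the main difficulty to be keeping the $q_i^k$ bounded below throughout the induction, which is what makes $L,L'$ finite and the scheme self-consistent. Here the proviso does the work: because $\tilde a_i^k\in(0,1)$ and $t$ is finite, $\tilde q_i^k=q_i^0\prod_{l=1}^k\tilde a_i^l\ge c>0$ for $k\le t$, with $q_i^0>0$ since $\mathbf p\in(0,1)^N$; the bound $e_q^k\le C_t\sigma$ then forces $q_i^k\ge\tfrac12\tilde q_i^k\ge c/2$ once $\sigma<\delta$ is small, closing the induction. The only delicate point is uniformity of $c$ and $C_t$ as $\mathbf p$ ranges over the admissible directions; for the use made of the Claim in Proposition \ref{INSTAB} the direction $\mathbf p/\|\mathbf p\|_N$ is held fixed while only the scale $\sigma$ shrinks, so this uniformity is not actually required, and in general it follows from continuous dependence of the finitely many relevant quantities on the fixed data $\mathbf a$, $t$ and the direction.
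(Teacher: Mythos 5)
Your argument is correct and reaches the stated conclusion, but by a genuinely different route from the paper's. The paper proves Claim \ref{CONTORBIT} by a soft finite-time continuous-dependence induction in the original variables: it telescopes $F^{k+1}-F_\text{lin}^{k+1}$ as $\bigl(F^{k+1}-F\circ F_\text{lin}^{k}\bigr)+\bigl(F\circ F_\text{lin}^{k}-F_\text{lin}^{k+1}\bigr)$, controlling the first term by continuity of $F$ at the points of the linearized orbit and the second by the one-step estimate $\|F(\mathbf p,\mathbf a)-F_\text{lin}(\mathbf p,\mathbf a)\|_{2N}\to 0$ as $\|\mathbf p\|_N\to 0$ (from $f_a\in C^2$), with no rescaling and no rates. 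Your rescaling $\mathbf q^k=\mathbf p^k/\sigma$ is the distinctive element: exploiting the degree-$0$ homogeneity of the $\mathbf a$-update and the commutation of $F_\text{lin}$ with scaling, it converts the degenerate limit $\mathbf p\to\mathbf 0$ --- where the continuity modulus of $(a_i,\mathbf p)\mapsto a_i\frac{\sum_j p_j}{Np_i}$ blows up, precisely the regime in which the paper's continuity property is invoked --- into a regular $O(\sigma)$ perturbation problem on a compact set away from $\{q_i=0\}$. This buys quantitative bounds ($O(\sigma)$ on the $\mathbf a$-error, $O(\sigma^2)$ on the $\mathbf p$-error) and makes explicit a point the paper's sketch glosses over: absolute closeness of the $\mathbf p$-coordinates does not by itself propagate closeness of the $\mathbf a$-coordinates; one needs closeness of the \emph{ratios}, which is exactly what your $e_q^k$ measures. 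Two small repairs are needed. First, your final appeal to ``continuous dependence on the direction'' for uniformity is the one thin spot: the set of unit directions with all $q_i>0$ is not compact, so continuity alone gives nothing; what does give uniformity is the proviso itself, since $\tilde a_i^{k+1}=\tilde a_i^k\,\frac1{N\tilde q_i^k}\sum_j\tilde q_j^k<1$ forces $\tilde q_i^k>\frac{\tilde a_i^k}{N}\max_j\tilde q_j^k$, which propagates inductively to direction-independent lower bounds $\tilde q_i^k\ge c_k(\mathbf a,N,t)>0$ (for the application in Proposition \ref{INSTAB} the direction is fixed, as you note, so only the Claim's full generality needs this). Second, your uniform quadratic bound $|f_a(p)-ap|\le Cp^2$ uses slightly more than the stated hypotheses (which bound $f_a''$ only at $0$); restricting $a$ to the compact set of values actually visited and using $C^2$-regularity there suffices, and note that, exactly as in the paper, the comparison must be understood for the $\alpha=0$ update $f_{a}$ rather than $f_{\alpha,a}$, whose derivative at $0$ is $\alpha+(1-\alpha)a$, not $a$.
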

This statement, whose proof will be given below, helps to complete the proof of the Proposition. Let $\mathbf a^0\in (0,1)^N$ and $\mathbf p\in (0,1)^N$ be such that $(\mathbf p,\mathbf a^0)$ is not synchronized. Let $t_0\in\N$ be the first time such that the coordinates of $\mathbf p^{t_0}$ are not all equal. We may assume that $\mathbf a^{t_0}\in (0,1)^N$, otherwise there is nothing to prove. For the sake of notation, we also assume $t_0=0$. Now, let 
\[
\epsilon_0=\max_{i\in\{1,\cdots ,N\}}a_{\text{lin},i}^{t(\mathbf p,\mathbf a^0)}-1,
\]
where $\mathbf a_\text{lin}^t$ are the $\mathbf a$-coordinates of $F_\text{lin}^t(\mathbf p, \mathbf a^0)$. Let also $\Delta>0$ be the quantifier $\delta$ given by Claim \ref{CONTORBIT} when $t=t(\mathbf p,\mathbf a^0)$ and $\epsilon=\epsilon_0$. For every $\delta\in (0,\Delta)$, the orbit of $F$ issued from $(\delta \mathbf p,\mathbf a^0)$ certainly satisfies
\[
\max_{i\in\{1,\cdots ,N\}}a_i^{t(\mathbf p,\mathbf a^0)}>1,
\]
as desired, where $\mathbf a^t$ are the $\mathbf a$-coordinates of $F^t(\delta \mathbf p,\mathbf a^0)$. The proof of Proposition \ref{INSTAB} is complete. \hfill $\Box$
\medskip

\noindent
{\sl Proof of Claim \ref{CONTORBIT}.} The proof is standard and proceeds by induction based on the following properties.
\begin{itemize}
\item[(i)] For every $(\mathbf p,\mathbf a)\in (0,1)^N\times (\R_\ast^+)^N$ and $\epsilon>0$, there exists $\delta>0$ such that for every $(\mathbf p',\mathbf a')$ satisfying $\|(\mathbf p',\mathbf a')-(\mathbf p,\mathbf a)\|_{2N}<\delta$, we have 
\[
\|F(\mathbf p',\mathbf a')-F(\mathbf p,\mathbf a)\|_{2N}<\epsilon,
\] 
as a direct consequence of the facts that $a\mapsto f_a(x)$ and $a\mapsto \|f'_a\|_\infty$ are continuous in $\R_\ast^+$,\footnote{which implies that 
\[
|f_{a'}(p')-f_a(p)|\leq |f_{a'}(p')-f_{a'}(p)|+|f_{a'}(p)-f_a(p)|,
\]
can be made arbitrarily small by taking $(p',a')$ arbitrarily close to $(p,a)$.} together with continuity of the map $g$ at every point of $(0,1]\times [0,1]$.
\item[(ii)] For every $\mathbf a\in (0,1)^N$ and $\epsilon>0$, there exists $\delta>0$ such that for every $\mathbf p\in (0,1)^N$ satisfying $\|\mathbf p\|_N<\delta$, we have
\[
\|F(\mathbf p,\mathbf a)-F_\text{lin}(\mathbf p,\mathbf a)\|_{2N}<\epsilon,
\] 
as an immediate consequence of the fact that $f_a\in C^2([0,1])$ for every $a\in \R_\ast^+$.
\end{itemize}
For the induction, consider the following decomposition
\[
F^{t+1}(\mathbf p,\mathbf a)-F_\text{lin}^{t+1}(\mathbf p,\mathbf a)=F^{t+1}(\mathbf p,\mathbf a)-F\circ F_\text{lin}^{t}(\mathbf p,\mathbf a)+F\circ F_\text{lin}^{t}(\mathbf p,\mathbf a)-F_\text{lin}^{t+1}(\mathbf p,\mathbf a),
\]
and use 
\begin{itemize}
\item the induction hypothesis together with property {\sl (i)} above in order to control the first difference,
\item  and the fact that all $\mathbf p$-coordinates of $F_\text{lin}^{t}(\mathbf p,\mathbf a)$ remain small (also shown by induction) together with {\sl (ii)} above (and $\mathbf a_\text{lin}^t\in (0,1)^N$) in order to control the second difference. \hfill $\Box$
\end{itemize}

\subsection{Global analysis in phase space}\label{P-MAINRES}
This Section is devoted to the proof of Theorem \ref{MAINRES} and begins by the presentation and proof of several auxiliary properties.

\subsubsection{Preliminary statements}
We begin with an easy preliminary result on the behaviour of the iterates
 \begin{equation}
 p_{t+1}=f_{\alpha ,a_{t+1}}(p_t),\ t\in\N,
 \label{AUXILIT}
 \end{equation}
given a sequence $\textbf{a}=\{a_t\}_{t\in\N}\in (\R^+_\ast)^\N$. While the result is elementary in itself, it serves as a reference for the results about the collective system.
\begin{Claim}
Assume that $\limsup_{t\to +\infty}a_t<1$ (resp.\ $\liminf_{t\to +\infty}a_t>1$). Then, under the iterations \eqref{AUXILIT} above, the iterates have the following asymptotic behaviour
\[
\lim_{t\to +\infty} p_t=0\quad (\text{resp.}\ \lim_{t\to +\infty} p_t=1)\quad \text{for all}\ p_0\in (0,1).
\]
\label{CONVFA}
\end{Claim}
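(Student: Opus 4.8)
The plan is to prove this elementary one-dimensional statement by exploiting the three defining properties of the maps $f_a$ — monotonicity in $p$, the ordering $f_{a_1}<f_{a_2}$ for $a_1<a_2$, and the global attraction of the fixed point $0$ (resp.\ $1$) when $a<1$ (resp.\ $a>1$) — combined with the fact that the convex combination defining $f_{\alpha,a}$ preserves these features. I treat only the case $\limsup_{t\to+\infty}a_t<1$; the case $\liminf_{t\to+\infty}a_t>1$ follows by an identical argument (or by invoking the inversion symmetry that exchanges $0$ and $1$).

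First I would fix a bound $a\in(\limsup_t a_t,1)$ and choose $T\in\N$ large enough that $a_t\le a$ for all $t>T$. The key monotonicity input is that $f_{\alpha,a_t}\le f_{\alpha,a}$ pointwise on $(0,1)$ for $t>T$, by the third assumption on $f_a$ together with the convexity of $f_{\alpha,a}=\alpha p+(1-\alpha)f_a$ in the parameter. Since each $f_{\alpha,a_t}$ is also increasing in $p$, a straightforward induction comparing the true iterates $p_t$ with the iterates $\tilde p_t$ of the autonomous map $f_{\alpha,a}$ (started from the common value $p_T$) yields $0\le p_t\le \tilde p_t$ for all $t\ge T$. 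Thus it suffices to show $\tilde p_t\to 0$.

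The convergence $\tilde p_t\to 0$ is exactly the synchronized one-dimensional dynamics already invoked in the excerpt: since $a<1$, the first assumption on $f_a$ gives $f_a(p)<p$ on $(0,1]$ with $f_a(0)=0$, and the convex combination gives $f_{\alpha,a}(p)=\alpha p+(1-\alpha)f_a(p)<p$ for $p\in(0,1]$ as well (because $\alpha<1$). Hence $\{\tilde p_t\}$ is strictly decreasing and bounded below by $0$, so it converges to some limit $\ell\ge 0$; by continuity of $f_{\alpha,a}$ the limit must be a fixed point, and $0$ is the only fixed point of $f_{\alpha,a}$ in $[0,1)$ when $a<1$. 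Therefore $\ell=0$, and the squeeze $0\le p_t\le\tilde p_t$ forces $p_t\to 0$.

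I do not expect a genuine obstacle here: the statement is deliberately elementary and every ingredient is already assembled among the assumptions on $f_a$. The only point requiring mild care is the pointwise comparison $f_{\alpha,a_t}\le f_{\alpha,a}$ on the relevant subinterval — one must ensure the third assumption ($f_{a_1}<f_{a_2}$ on $(0,1)$) is correctly transferred through the convex combination and that the endpoints $0,1$ are handled separately, but this is routine. The role of the claim, as the text notes, is to serve as a baseline reference for the collective system, so keeping the argument short and transparent is preferable to any sharper quantitative estimate.
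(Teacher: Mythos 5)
Your proof is correct, and it takes a genuinely different route from the paper's in the step where the limit is identified. The paper argues directly on the non-autonomous orbit: once $a_t<1$ for $t\geq T$, the inequality $f_{\alpha,a_{t+1}}(p)<p$ makes $\{p_t\}$ eventually decreasing, hence convergent to some $p_\infty\in[0,1)$; it then extracts a convergent subsequence $a_{t_n}\to a_\infty\leq\limsup_t a_t<1$ and uses the continuity of $a\mapsto f_a(x)$ together with the continuity of $a\mapsto\|f'_a\|_\infty$ to pass to the limit in the recursion, obtaining $p_\infty=f_{\alpha,a_\infty}(p_\infty)$ and hence $p_\infty=0$. You instead freeze the parameter at a uniform majorant $a\in(\limsup_t a_t,1)$, dominate the true orbit by the autonomous orbit of $f_{\alpha,a}$ via the ordering $f_{a_1}\leq f_{a_2}$ and monotonicity in $p$, and conclude by a squeeze. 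Your comparison argument is marginally more elementary in that it bypasses the continuity-in-$a$ hypotheses altogether and needs only the monotone ordering of the family (your own caveat about transferring the strict inequality on $(0,1)$ to the endpoints is handled by a limiting argument and is indeed routine); the paper's subsequence-and-continuity style has the advantage of not relying on any order structure and is the pattern the authors reuse in the global analysis of Section \ref{P-MAINRES}. Both arguments exploit the strict inequality $\limsup_{t\to+\infty}a_t<1$ in an essential way --- you to secure the majorant $a<1$, the paper to guarantee $a_\infty<1$ so that $0$ is the unique fixed point of $f_{\alpha,a_\infty}$ in $[0,1)$ --- and both are complete.
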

\begin{proof}
We only prove the first case. The second one follows from symmetry. 
By assumption, let $T\in\N$ be such that $a_t<1$ for all $t\geq  T$. For $t\geq T-1$, we have $f_{\alpha ,a_{t+1}}(p)<p$ for all $p\in (0,1)$; hence the sequence $\{p_t\}_{t\in\N}$ eventually decreases. Therefore, it must be convergent. 
Let us show that the limit $p_\infty \in [0,1)$ must be equal to 0. Let $\{a_{t_n}\}_{n\in\N}$ be a convergent subsequence and let $a_\infty$ be its limit. Writing
\begin{align*}
|f_{\alpha ,a_{t_n}}(p_{t_n-1})-f_{\alpha ,a_\infty}(p_\infty)|&\leq |f_{\alpha ,a_{t_n}}(p_{t_n-1})-f_{\alpha ,a_{t_n}}(p_\infty)|+|f_{\alpha ,a_{t_n}}(p_\infty)-f_{\alpha ,a_\infty}(p_\infty)|\\
&\leq \|f'_{\alpha ,a_{t_n}}\|_\infty|p_{t_n-1}-p_\infty|+|f_{\alpha ,a_{t_n}}(p_\infty)-f_{\alpha ,a_\infty}(p_\infty)|
\end{align*}
and using the continuity of $a\mapsto \|f'_a\|_\infty$, we obtain 
\[
p_\infty=\lim_{n\to\infty}p_{t_n}=\lim_{n\to\infty}f_{\alpha ,a_{t_n}}(p_{t_n-1})=f_{\alpha ,a_\infty}(p_\infty),
\]
hence $p_\infty=0$. 
\end{proof}

The next statement collects some constraints on the asymptotic behaviours in the system \eqref{DEFDYNAM}.
\begin{Claim} (i) It is impossible that 
\[
\min_{i\in\{1,\cdots ,N\}}\limsup_{t\to +\infty}a_i^t<1<\max_{i\in\{1,\cdots ,N\}}\liminf_{t\to +\infty}a_i^t.
\]

\noindent
(ii) Assume that all the sequences $\{p_i^t\}_{t\in\N}$ converge. Then, their limits must be equal.
\label{EQCONV}
\end{Claim}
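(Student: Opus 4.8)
The plan is to reduce both parts to the one-dimensional convergence result of Claim~\ref{CONVFA} and then to read off the eventual sign of the multiplicative attractiveness increments $g(p_i^t,\frac1N\sum_k p_k^t)$ from the limiting $\mathbf p$-profile, using only the monotonicity encoded in \eqref{INEQG} together with the continuity of $g$ on $[0,1]^2$.

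For part (i), I would argue by contradiction: suppose there are indices $i_0$ with $\limsup_{t} a_{i_0}^t<1$ and $j_0$ with $\liminf_{t} a_{j_0}^t>1$. Claim~\ref{CONVFA} then forces $p_{i_0}^t\to 0$ and $p_{j_0}^t\to 1$, so the mean satisfies $\liminf_t \frac1N\sum_k p_k^t\geq \frac1N>0$. Hence for all large $t$ the mean strictly exceeds $p_{i_0}^t$, and \eqref{INEQG} gives $g(p_{i_0}^t,\frac1N\sum_k p_k^t)>1$, so $\{a_{i_0}^t\}$ is eventually increasing. Being bounded above (by its $\limsup<1$) and bounded below away from $0$, it converges to some $L\in(0,1)$, whence the increment ratio $g(p_{i_0}^t,\frac1N\sum_k p_k^t)=a_{i_0}^{t+1}/a_{i_0}^t\to 1$. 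On the other hand, passing to a subsequence along which the mean tends to some $\bar p_\infty\geq \frac1N>0$ and using continuity of $g$ together with $p_{i_0}^t\to 0$, the same ratio tends to $g(0,\bar p_\infty)$, which is $>1$ by \eqref{INEQG} since $\bar p_\infty>0$. This contradiction establishes (i).

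For part (ii), assume all $p_i^t$ converge, write $p_i^\infty$ for the limits and $\bar p^\infty=\frac1N\sum_k p_k^\infty$ for the (convergent) mean, and suppose for contradiction that the $p_i^\infty$ are not all equal. Then the minimal limit lies strictly below the mean, so there is an index $i$ with $p_i^\infty<\bar p^\infty$. Since $\log a_i^{t+1}-\log a_i^t=\log g(p_i^t,\frac1N\sum_k p_k^t)\to \log g(p_i^\infty,\bar p^\infty)$, which is strictly positive by \eqref{INEQG}, telescoping shows $\log a_i^t\to +\infty$, i.e.\ $a_i^t\to +\infty$; in particular $\liminf_t a_i^t>1$, so Claim~\ref{CONVFA} yields $p_i^t\to 1$ and thus $p_i^\infty=1$. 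This is incompatible with $p_i^\infty<\bar p^\infty\leq 1$, so all limits must coincide.

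Both arguments are short given Claim~\ref{CONVFA}; the one delicate point, and the step I would check most carefully, is the passage to the limit in $g$ at the boundary point $(0,\bar p_\infty)$ in part (i): one must use that $g$ is continuous on all of $[0,1]^2$ and that $\bar p_\infty$ remains strictly positive (guaranteed by the coordinate $j_0$ whose clientele tends to $1$), so that the \emph{strict} inequality in \eqref{INEQG} survives in the limit and genuinely contradicts the ratio tending to $1$. The common mechanism in both parts is identical: Claim~\ref{CONVFA} converts a $\limsup$ or $\liminf$ of $a_i^t$ on the wrong side of $1$ into $p_i^t\to 0$ or $p_i^t\to 1$, and \eqref{INEQG} then detects the eventual sign of the increments from the limiting clientele profile.
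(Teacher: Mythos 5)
Your proof is correct and follows essentially the same route as the paper: Claim~\ref{CONVFA} converts the hypotheses on $\limsup$/$\liminf$ of $a_i^t$ into $p_i^t\to 0$ or $p_i^t\to 1$, and then continuity of $g$ on $[0,1]^2$ together with the strict inequality \eqref{INEQG} forces the increments $g(p_i^t,\frac1N\sum_k p_k^t)$ to stay on the wrong side of $1$, contradicting the assumed behaviour of $a_i^t$. The only (cosmetic) difference is in part (i): the paper extracts a uniform bound $g>1+\epsilon$ and concludes $a_{i_{\min}}^t\to+\infty$ outright, whereas you let $a_{i_0}^t$ converge to a finite limit and contradict the ratio tending to $1$ via a subsequential limit of $g$ — both rest on the same two ingredients.
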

\begin{proof}
{\sl (i)} By contradiction, assume the existence of $i_\text{min},i_\text{max}\in\{1,\cdots ,N\}$ such that 
\[
\limsup_{t\to +\infty}a_{i_\text{min}}^t<1<\liminf_{t\to +\infty}a_{i_\text{max}}^t.
\]
Claim \ref{CONVFA} then implies the following limits
\[
\lim_{t\to +\infty}p_{i_\text{min}}^t=0\quad\text{and}\quad \lim_{t\to +\infty}p_{i_\text{max}}^t=1.
\]
As a consequence, the continuity of $g$ and the inequality \eqref{INEQG} imply the existence of $\epsilon>0$ and  $T\in\N$ such that
\[
g(p_{i_\text{min}}^t,\frac1{N}\sum_{i=1}^{N}p_i^t)>1+\epsilon>1>1-\epsilon>g(p_{i_\text{max}}^t,\frac1{N}\sum_{i=1}^{N}p_i^t),\quad \forall t>T,
\]
from which it follows that 
\begin{equation}
\lim_{t\to +\infty}a_{i_\text{min}}^t=+ \infty\quad\text{and}\quad \lim_{t\to +\infty} a_{i_\text{max}}^t=0,
\label{XTRLIM}
\end{equation}
in contradiction with the initial assumption. 
\medskip

\noindent
{\sl (ii)} Similarly, assume the existence of $i_\text{min},i_\text{max}\in\{1,\cdots ,N\}$ such that 
\[
p_{i_\text{min}}^{\infty} := \lim_{t\to +\infty}p_{i_\text{min}}^t<p_\text{max}^{\infty} := \lim_{t\to +\infty}p_{i_\text{max}}^t.
\]
Then, as before, the limits \eqref{XTRLIM} must hold and therefore we must have $p_{i_\text{min}}^\infty=1$ and $p_{i_\text{max}}^\infty=0$, 
contradicting the inequality above.
\end{proof}

\subsubsection{Proof of Theorem \ref{MAINRES}}
The Theorem is an immediate corollary of the following more technical statement.
\begin{Pro}
Assume that $g$ satisfies the conditions \eqref{UPBOUNDG} and \eqref{CONCAV}. Then the following statements hold.

\noindent
(i) In every orbit, the product $\pi^t$ is non-increasing.

\noindent
(ii) If $\lim_{t\to +\infty} \pi^t>0$, then we have 
\[
\lim_{t\to +\infty}\max_{i,j\in\{1,\cdots ,N\}}|p_i^t-p_j^t|=0.
\]

\noindent
(iii) The condition
\[
\liminf_{t\to +\infty}\min_{i\in\{1,\cdots ,N\}}|a_i^t-1|>0,
\]
implies that the corresponding orbit must converge either to a fixed point or to $(\mathbf 0,\mathbf 0)$.

\noindent
(iv) For every orbit, we have
\[
\sup_{t\in\N}\max_{i\in\{1,\cdots ,N\}}a_i^t<+\infty.
\]
\label{LASTP}
\end{Pro}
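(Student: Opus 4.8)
The backbone of the whole proposition is the product $\pi^t=\prod_{i=1}^N a_i^t$, whose multiplicative increment is $\pi^{t+1}/\pi^t=\prod_{i=1}^N g_i^t$, where I write $g_i^t:=g(p_i^t,\frac1N\sum_{j=1}^N p_j^t)$. For part (i) the plan is simply to combine the arithmetic–geometric mean inequality $\big(\prod_i g_i^t\big)^{1/N}\le\frac1N\sum_i g_i^t$ (legitimate since $g_i^t>0$ by \eqref{POSITIVEG}) with \eqref{CONCAV}, extended to the closed cube by continuity of $g$, to obtain $\prod_i g_i^t\le1$; hence $\pi^t$ is non-increasing and, being non-negative, converges to some $\pi^\infty\ge0$. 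For part (ii) I would observe that $\sum_t\big(-\log(\pi^{t+1}/\pi^t)\big)=\log(\pi^0/\pi^\infty)$ is finite when $\pi^\infty>0$, so the non-negative quantity $\Phi(\mathbf p^t):=-\sum_i\log g_i^t$ tends to $0$. The equality case of AM–GM together with the strict sign condition \eqref{INEQG} shows that $\Phi(\mathbf p)=0$ precisely when all coordinates of $\mathbf p$ coincide; since $\Phi$ is continuous on the compact cube $[0,1]^N$ and vanishes only on the diagonal, a routine compactness argument upgrades $\Phi(\mathbf p^t)\to0$ into $\Delta^t:=\max_{i,j}|p_i^t-p_j^t|\to0$.

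For part (iii) the idea is to reduce, through the dichotomy on $\pi^\infty$, to situations already governed by Claims \ref{CONVFA} and \ref{EQCONV}. Let $\delta>0$ and $T$ be the constants furnished by the hypothesis $\liminf_t\min_i|a_i^t-1|>0$, so that each $a_i^t$ avoids the band $(1-\delta,1+\delta)$ for $t\ge T$. I would first treat the case $\pi^\infty>0$, where part (ii) gives $\Delta^t\to0$; uniform continuity of $g$ and $g(p,p)=1$ then yield $g_i^t\to1$ for every $i$. Since a jump of $a_i^t$ across the forbidden band would force $|\log g_i^t|\ge\log\frac{1+\delta}{1-\delta}>0$, no such crossing can occur for large $t$, so each $\{a_i^t\}$ eventually stays on one side of $1$. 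Claim \ref{EQCONV}(i) forbids two coordinates from settling on opposite sides, hence either all $a_i^t\le1-\delta$ eventually, or all $a_i^t\ge1+\delta$ eventually, and Claim \ref{CONVFA} gives $\mathbf p^t\to\mathbf 0$ or $\mathbf p^t\to\mathbf 1$. In the first case the convergence is exponential as in the proof of Proposition \ref{STABP1}, so \eqref{UPBOUNDG} bounds $|a_i^{t+1}-a_i^t|=a_i^t|g_i^t-1|$ by a summable sequence, each $a_i^t$ converges, and the orbit reaches a genuine fixed point $(\mathbf 0,\mathbf a)$.

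The residual difficulties, which I expect to be the delicate part, are the convergence of $\mathbf a^t$ when $\mathbf p^t\to\mathbf 1$ and the degenerate branch $\pi^\infty=0$. For $\mathbf p^t\to\mathbf 1$ one would want the analogue of \eqref{UPBOUNDG} near $(1,1)$, which is not assumed; the substitute I would try is to show that the ordering of the $p_i^t$ about their mean eventually stabilises, so that each $a_i^t$ becomes monotone and converges by boundedness (part (iv)). When $\pi^\infty=0$ one has $\prod_i a_i^t\to0$; combining part (iv) (no $a_i^t$ escapes to $+\infty$) with Claim \ref{EQCONV}(i) should force every coordinate into the lower region, giving $\mathbf p^t\to\mathbf 0$ and $\mathbf a^t\to\mathbf 0$, i.e. convergence to $(\mathbf 0,\mathbf 0)$. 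Ruling out persistent band-crossings in this branch, where $\Delta^t\to0$ is unavailable, is the subtle step here.

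Finally, part (iv) is the genuine obstacle and the only place where the negative feedback must be used quantitatively rather than through the Lyapunov function $\pi^t$. The plan is a first-passage estimate: whenever some $a_i^t$ is large (and still $\ge1$, so $g_i^t\ge1$ and $p_i^t\le\bar p^t$), the map $f_{\alpha,a_i^{t+1}}$ drives $p_i$ upward by a definite amount, so after a number of steps controlled uniformly in the orbit the coordinate $p_i$ must overtake the running mean $\frac1N\sum_j p_j^t$, forcing $g_i^t<1$ and reversing the growth of $a_i^t$. The hard part will be making this uniform, namely preventing the mean itself from running away upward; here I would lean on the non-increase of $\pi^t$ from part (i) and the summed constraint $\sum_i g_i^t\le N$ from \eqref{CONCAV}. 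Extracting from these a clean a priori ceiling on $\max_i a_i^t$ is, I expect, the most technical step of the whole proposition.
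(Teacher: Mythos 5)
Parts (i) and (ii) of your proposal are correct and essentially identical to the paper's argument (the paper phrases (ii) via subsequences and the equality case of AM--GM rather than via $\sum_t\Phi(\mathbf p^t)<\infty$, but it is the same mechanism), and your treatment of the $\pi^\infty>0$ branch of (iii) down to $\mathbf p^t\to\mathbf 0$ or $\mathbf p^t\to\mathbf 1$ matches the paper's use of $g_i^t\to 1$, Claim \ref{EQCONV}(i) and Claim \ref{CONVFA}. (Your worry about the convergence of $\mathbf a^t$ when $\mathbf p^t\to\mathbf 1$ is legitimate but is not resolved in the paper either, which simply invokes Claim \ref{CONVFA} at that point.)

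The genuine gap is in the two places you yourself flag as open: the $\pi^\infty=0$ branch of (iii) and all of (iv). The single missing idea, which the paper uses to dispose of both at once, is a \emph{backward-in-time propagation} argument. Since $\gamma:=\max Sg<+\infty$, i.e.\ $g$ is bounded below away from $0$, one has $a_{i_0}^{t-s}\le \gamma^s a_{i_0}^{t}$; hence if $a_{i_0}^{t_k}\to 0$ along a subsequence, then $a_{i_0}^{t_k-s}\to 0$ for every fixed $s$, and because iterating $f_{\alpha,a}$ with $a$ small contracts all of $[0,1]$ towards $0$ uniformly, this forces $p_{i_0}^{t_k}\to 0$. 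One then refines the subsequence so that $a_{i_0}^{t_k-1}\ge a_{i_0}^{t_k}$ (always possible, since $\{a_{i_0}^t\}$ cannot be eventually increasing while accumulating at $0$); by \eqref{INEQG} this yields $p_{i_0}^{t_k}\ge\frac1N\sum_j p_j^{t_k}$, hence \emph{all} $p_i^{t_k}\to 0$, hence $\limsup_k a_i^{t_k}\le 1$ for every $i$ (otherwise $p_i^{t_k}\ge f_{\alpha,1+\epsilon}(0)>0$), and the liminf hypothesis upgrades this to $\limsup_k a_i^{t_k}<1$, so Proposition \ref{STABP1} applies at time $t_k$ and finishes the $\pi^\infty=0$ case. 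Your sketch ("part (iv) plus Claim \ref{EQCONV}(i) should force every coordinate into the lower region") does not substitute for this: Claim \ref{EQCONV}(i) only rules out one coordinate settling above $1$ while another settles below, and does not prevent coordinates from oscillating across the band indefinitely when $\Delta^t\to 0$ is unavailable; moreover the limit need not be $(\mathbf 0,\mathbf 0)$ but a ghost fixed point $(\mathbf 0,\mathbf a)$ with only some coordinates of $\mathbf a$ vanishing.

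For (iv), the quantitative first-passage/overtaking estimate you propose is not needed and is indeed hard to make uniform; the paper's proof is soft. Suppose $a_{i_1}^{t_k}\to+\infty$; then (by the forward analogue of the propagation argument) $p_{i_1}^{t_k}\to 1$. But $\pi^t$ is non-increasing by (i), hence bounded, so some other coordinate must satisfy $a_{i_0}^{t_k}\to 0$ along a further subsequence, and the backward-propagation argument above gives $p_i^{t_k}\to 0$ for all $i$, contradicting $p_{i_1}^{t_k}\to 1$. In short, (iv) is a direct corollary of (i) together with the lemma you are missing, rather than the "most technical step" requiring a separate feedback analysis.
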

\begin{proof}
{\em (i)} Applying the AM-GM inequality and then the inequality \eqref{CONCAV} yields
\[
\prod_{i=1}^Ng(p_i,\frac1{N}\sum_{i=1}^Np_i)\leq\left(\frac1{N}\sum_{i=1}^Ng(p_i,\frac1{N}\sum_{i=1}^Np_i)\right)^N\leq 1,\ \forall \mathbf p\in (0,1)^N.
\]
\medskip

\noindent
{\em (ii)} We have 
\[
\lim_{t\to +\infty} \frac{\pi^{t+1}}{\pi^t}=\lim_{t\to +\infty} \prod_{i=1}^Ng(p_i^t,\frac1{N}\sum_{i=1}^Np_i^t)=1,
\]
which implies 
\begin{equation}
\lim_{t\to +\infty} \frac{a_i^{t+1}}{a_i^t}=\lim_{t\to +\infty} g(p_i^t,\frac1{N}\sum_{i=1}^Np_i^t)=1,\ \forall i\in\{1,\cdots ,N\}.
\label{CONVRAT}
\end{equation}
Indeed, by compactness, let $\{t_k\}_{k\in\N}$ be an infinite subsequence such that  
\[
\lim_{k\to +\infty} g(p_i^{t_k},\frac1{N}\sum_{i=1}^Np_i^{t_k})=g_i^\infty\in [0,\gamma]\ \text{exists for all}\ i\in\{1,\cdots ,N\}.
\]
Then, we have
\[
1=
\prod_{i=1}^Ng_i^\infty\leq \left(\frac1{N}\sum_{i=1}^Ng_i^\infty\right)^N=\lim_{k\to +\infty} \left(\frac1{N}\sum_{i=1}^Ng(p_i^{t_k},\frac1{N}\sum_{i=1}^Np_i^{t_k})\right)^N\leq 1,
\]
and optimality of the AM-GM inequality imposes that 
\[
g_i^\infty=1, \ \forall i\in\{1,\cdots ,N\},
\]
as desired. 

Furthermore, we claim that the second limits in \eqref{CONVRAT} implies
\[
\lim_{t\to +\infty} \left|p_i^t-\frac1{N}\sum_{i=1}^Np_i^t\right|=0,\ \forall i\in\{1,\cdots ,N\},
\]
from where the desired conclusion immediately follows. Indeed, by contradiction, assume that $\{\mathbf p^t\}$ had an accumulation point $\mathbf p^\infty$ such that 
\[
\min_{i\in\{1,\cdots ,N\}}\left|p_i^\infty- \frac1{N}\sum_{i=1}^Np_i^\infty\right|>0.
\]
The continuity of the map $g$ implies that each sequence $\{g(p_i^t,\frac1{N}\sum_{i=1}^Np_i^t)\}$ must accumulate on $g(p_i^\infty,\frac1{N}\sum_{i=1}^Np_i^\infty)$. However, one of these values would have to differ from 1 because of the condition \eqref{INEQG}, which is impossible.
\medskip

\noindent
{\em (iii)} We separate the cases $\lim_{t\to +\infty} \pi^t=0$ and $\lim_{t\to +\infty} \pi^t>0$. In the first case, let $\{t_k\}_{k\in\N}$ be an infinite subsequence such that 
\[
\lim_{k\to +\infty}\min_{i\in \{1,\cdots ,N\}}|a_i^{t_k}-1|>0.
\]
By passing to a further subsequence if necessary, the convergence $\pi^t\to 0$ implies the existence of $i_0\in\{1,\cdots ,N\}$ such that 
\[
\lim_{k\to +\infty} a_{i_0}^{t_k}=0.
\]
By continuity and compactness, let\footnote{For the example $g(p,q)=\frac{q}{p}$, we actually have
\[
\sup_{\mathbf p\in (0,1)^N}Sg(p_i,\frac1{N}\sum_{i=1}^Np_i)<N,
\]
which is all we need for our purpose.}
\[
\gamma=\max_{[0,1]^2}Sg<+\infty.
\]
Together with the previous limit, that $\gamma$ is finite implies that for every $t\in\N$, we have 
\[
\lim_{k\to +\infty} \max_{s\in\{0,\cdots ,t\}}a_{i_0}^{t_k-s}=0.
\]
Independently, the properties of the maps $f_a$ ensure that 
\[
\lim_{t\to +\infty}\max_{p\in [0,1]}f^t_{\alpha,a}(p)=0,\ \forall a\in (0,1).
\]
These two limits imply the following one
\[
\lim_{k\to +\infty} p_{i_0}^{t_k}=0.
\]
Besides, we may assume that $\{t_k\}_{k\in\N}$ is such that $a_{i_0}^{t_k-1}\geq a_{i_0}^{t_k}$. Indeed, it is impossible that $\{a_{i_0}^t\}_{t\in\N}$ be eventually increasing. Moreover, its liminf is certainly attained in the set of values of $t$ for which $a_{i_0}^{t-1}\geq a_{i_0}^t$. The inequality $a_{i_0}^{t_k-1}\geq a_{i_0}^{t_k}$ forces the following one
\[
p_{i_0}^{t_k}\geq \frac1{N}\sum_{i=1}^Np_i^{t_k}.
\]
Hence, we must have
\[
\lim_{k\to +\infty} p_i^{t_k}=0,\ \forall i\in\{1,\cdots ,N\}.
\]
Moreover, as shown next, this imposes in turn
\[
\limsup_{k\to +\infty} a_i^{t_k}\leq 1,\ \forall i\in\{1,\cdots ,N\},
\]
and hence 
\[
\lim_{k\to +\infty} p_i^{t_k}=0\quad\text{and}\quad \limsup_{k\to +\infty} a_i^{t_k}< 1,\ \forall i\in\{1,\cdots ,N\},
\]
given the initial assumption in {\em (iii)}. Therefore $(\mathbf p^{t_k},\mathbf a^{t_k})$ must satisfy the conditions of Proposition \ref{STABP1} for $k$ sufficiently large, , hence the orbit $(\mathbf p^{t},\mathbf a^{t})$ must converge, completing the proof in the case $\lim_{t\to +\infty} \pi^t=0$. We prove the claimed inequality on the limsup above by contradiction. Given $i\in\{1,\cdots , N\}$, assume the existence of $\epsilon>0$ such that for every $k\in\N$, there exists $k'>k$ such that $a_i^{t_{k'}}>1+\epsilon$. Then 
\[
p_i^{t_{k'}}\geq f_{\alpha,1+\epsilon}(0)>0,
\]
making it impossible that $\lim_{k\to +\infty} p_i^{t_k}=0$. 

In the case $\lim_{t\to +\infty} \pi^t>0$, when combined with the assumption 
\[
\liminf_{t\to +\infty}\min_{i\in\{1,\cdots ,N\}}|a_i^t-1|>0,
\]
and Claim \ref{EQCONV} - {\sl (i)}, the fact that 
\[
\lim_{t\to +\infty} \frac{a_i^{t+1}}{a_i^t}=1,\ \forall i\in\{1,\cdots ,N\}
\]
implies the existence of $T\in\N$ and $\epsilon>0$ such that we have
\[
\text{either}\ \sup_{t>T}\max_{i\in\{1,\cdots ,N\}}a_i^t\leq 1-\epsilon,\quad \text{or}\ \inf_{t>T}\min_{i\in\{1,\cdots ,N\}}a_i^t\geq 1-\epsilon.
\]
Claim \ref{CONVFA} then ensures convergence to a fixed point. 
\medskip

\noindent
{\em (iv)} By contradiction, assume the existence of $i_1\in\{1,\cdots ,N\}$ and an infinite subsequence $\{t_k\}_{k\in\N}$ such that 
\[
\lim_{k\to +\infty} a_{i_1}^{t_k}=+\infty.
\]
Then similar arguments as in the proof of {\sl (iii)} can be developed to show that we must have
\[
\lim_{k\to +\infty} p_{i_1}^{t_k}=1.
\]
However, since $\pi^t$ converges to a finite limit, there must exist $i_0\in\{1,\cdots ,N\}$ such that
\[
\lim_{k\to +\infty} a_{i_0}^{t_k}=0,
\]
for the same subsequence. Moreover, we showed in the proof of {\sl (iii)} that this limit implies 
\[
\lim_{k\to +\infty} p_i^{t_k}=0,\ \forall i\in\{1,\cdots ,N\},
\]
which contradicts the limit of $\{p_{i_1}^{t_k}\}$ above. 
\end{proof} 

\section{Conclusion}\label{S-DISCUSS}
In this paper, a population model for the dynamics of buyers in markets of perishable goods has been introduced and mathematically analyzed. In particular, a simple mechanism for merchants feedback has been included, based on assuming immediate profit/competitiveness optimisation and prompt estimation of the overall volumes of buyers. 
Such negative regulation implies oscillatory behaviours and makes (local) asymptotic stabilization dependent on the nature of the reactivity rate. A bounded rate cannot counteract intrinsic response to merchants attractiveness/repulsiveness. In such case of moderate reactivity, convergence towards stationary functioning modes takes place, with either full or absence of clientele at all merchants. On the other hand, an arbitrarily large reactivity can promote resilience against vanishing clientele, as shown with the example $g(p,q)=\frac{q}{p}$. 

In addition to the control of local asymptotic stability of the fixed points, a concavity-type condition on the reactivity rate has been identified, that forces the oscillations to eventually cease in every orbit for which the merchants attractiveness remain bounded away from 1, a feature we believe to be generic and to fail only in exceptional cases. In particular, this may fail and oscillations may perdure forever when on the boundary of the fixed points basins of attraction, when neither convergence to a fixed point with $\mathbf p=\mathbf 1$ or $\mathbf p=\mathbf 0$ holds. Moreover, as evidenced in the numerics, to predict the asymptotic functioning mode, either full or empty clientele, is a delicate task when starting close to such boundary. In other words, while the basic ingredients of the dynamics are rather simple, the resulting temporal process that they generate can be rather involved and hardly predictable, especially in case of a major perturbation such as the introduction of a new seller in the market.

We conclude the paper with few additional comments about the modelling assumptions in the system \eqref{DEFDYNAM} and some suggestions about possible improvements. 

First, the modelling time scale has been chosen to be rather coarse for simplicity. It does not incorporate intraday variations of clientele or attractiveness. However, evidences have been given, especially at Rungis market \cite{C10}, that sellers may react during sessions in order to adapt their attractiveness to initially scarce clienteles or to insufficient sale volumes given their stocks. Hence, a more elaborated modelling should incorporate more rapid changes such as hourly variations. 

Likewise, the feedback term in \eqref{DEFDYNAM} assumes instant evaluation by the merchants of the overall clientele volume in the market. While this hypothesis is justified when the market place is compact or the number of competitors is limited, in large markets with a high number of merchants selling goods of the same nature, it would be relevant to adopt, in the feedback evaluation, a local clientele estimation that is evaluated in a limited neighbourhood of the seller under consideration. In such case of local interaction terms, the dynamics is often richer than in globally coupled models, as for instance, spatial patterns and non-synchronous solutions emerge \cite{CF05}.

Equally important since real populations are of finite size, it is questionable to consider  fractions $p_i^t$ that are extremely small or extremely large, and in particular smaller than the inverse of the total population size when the latter is given. A description of the dynamics at the level of individuals should be adopted when $p_i^t$ is close to 0 or to 1, especially when experimenting the introduction of a new seller with very small initial clientele. In that setting, it would be particularly interesting to identify the analogues of the stability condition \eqref{UPBOUNDG} at the level of individual dynamics. 

In addition, a more detailed dynamics at the level of individuals gives the opportunity to take into account intrinsic heterogeneities in the individual features that may arise in such diversified populations of buyers that are present in these markets. Likewise, temporal non-systematic fluctuations of the individual behaviours could be integrated.  In particular both these aspects could be taken into account through modelling of the dynamics using random process. All these considerations can be the subject of future studies.  

\paragraph{Acknowledgments.} 
We thank Simon Bussy, Corentin Curchod, Adeline Fermanian and Annick Vignes for fruitful discussions and relevant comments. We also acknowledge the anonymous referee for constructive critiques and suggestions. The research in this article has been accomplished in the framework of the project ANR LabCom LOPF, ref ANR-20-LCV1-0005.

\end{document}